\theoremstyle{plain}
\newtheorem{theorem}{Theorem}[section]
\newtheorem{lemma}[theorem]{Lemma}
\theoremstyle{definition}
\newtheorem{definition}[theorem]{Definition}
\theoremstyle{remark}
\tikzstyle{startstop} = [rectangle, rounded corners, minimum width=2.5cm, minimum height=1cm,text centered, draw=black, fill=white]
\tikzstyle{arrow} = [thick,->,>=stealth]
\title{Computing Volatility Surfaces using GANs with Minimal Arbitrage Violations}
\author{{Andrew S. Na}\thanks{CONTACT Andrew S. Na. Email: andrew.na@uwaterloo.ca}\hspace{1.5mm}\\
	David R. Cheriton School of Computer Science\\
	University of Waterloo\\
	Waterloo, ON\\
	\texttt{andrew.na@uwaterloo.ca} \\
	\And
	{Meixin Zhang} \\
	David R. Cheriton School of Computer Science\\
	University of Waterloo\\
	Waterloo, ON\\
	\texttt{meixin.zhang@uwaterloo.ca}
	\And
	{Justin W.L. Wan} \\
	David R. Cheriton School of Computer Science\\
	University of Waterloo\\
	Waterloo, ON\\
	\texttt{justin.wan@uwaterloo.ca}
}
\begin{document}

\maketitle

\begin{abstract}
In this paper, we propose a generative adversarial network (GAN) for efficiently computing volatility surfaces. Our framework trains a regularized generative adversarial network to compute volatility surfaces from time to maturity, moneyness and at the money implied volatility. We show an equivalent formulation between the GAN and the inverse problem. We incorporate calendar and butterfly spread arbitrage penalty terms to our generator loss functions that minimizes the arbitrage violations of our generated volatility surface. In this paper, we show that we can use GAN to speed up the computation of volatility surfaces while minimizing arbitrage violations. Our experiments show that by using regularization and the discriminator we can use a shallow network for the generator with accurate results. Comparing our models with other methods, we found that our method can outperform artificial neural network (ANN) frameworks in terms of errors and computation time. We show that our framework can generate market consistent implied and local volatility surfaces on minimal sample data by using a pre-trained discriminator and retraining the generator on market data.
\end{abstract}

\begin{keywords}
Machine Learning; Implied Volatility; Local Volatility; Calibration; Stochastic Volatility; Heston Model
\end{keywords}

\section{Introduction}
The Black-Scholes equation implies that the volatility parameter $\sigma$ is constant under geometric Brownian motion (GBM) with no market frictions \citep{cont-2002, lee-2002}. However, it is well documented that the observed market volatility is not constant \citep{lee-2002}. Black-Scholes implied volatility is one method to overcome this. Market option prices are often quoted in terms of Black-Scholes implied volatility. The Black-Scholes implied volatility $\sigma_{implied}(K,T)$ is solved from the Black-Scholes equation that matches the market price. The Black-Scholes implied volatility is often used by risk analysts and traders routinely for pricing derivatives and hedging risks \citep{cont-2002}. For easy exposition, we refer to the Black-Scholes implied volatility as the implied volatility.  

Another method to account for the non-constant volatility is to use stochastic volatility models to model asset prices \citep{lee-2002}. Stochastic volatility models assume the variance follows a stochastic process which is coupled with the asset price process. Stochastic volatility models are attractive to use as they can successfully model key features of volatility observed empirically such as curvature in large maturities. However, stochastic volatility models are difficult to calibrate due to the large number of parameters. In our paper, we use the Heston model to simulate asset prices. It is well known that the Heston model cannot replicate market prices for short expiries when parameters are time-consistent \citep{gatheral-2001-2}. This motivated the development of local volatility models. We let $\sigma_{local}(S,t)$ be the local volatility. Local volatility is computed as the solution to the inverse problem from the Dupires equation. The solution to the local volatility is not unique, which means the solution is difficult to verify \citep{gatheral-2001-1}. Absence of market data points also make it difficult to compute volatility surfaces that are consistent with market prices and great care must be taken during estimation \citep{coleman-2000, boyle-2000}. Also the lack of data availability in large and small strikes makes it more difficult to compute a volatility surface for all maturities and strikes. To avoid this we look at using synthetic market data computed from given Heston parameters. We assume the market prices $V_{mkt}$ generated from the Heston model is the true market price. The Heston model was chosen because it has a closed form solution for European options \citep{heston-1993}. 

The implied volatility from the Heston model can be computed using classical methods such as Brent's method or Newton's method at each maturity $T$ and strike $K$. At the calibrated $T$ and $K$ the implied volatility does not exhibit arbitrage. However, when we extrapolate and interpolate to another $T$ and $K$ arbitrage may be violated. Another challenge posed by SV models such as the Heston model is that the calibration of their parameters is difficult as the data may not be available in practice. Currently, practitioners use parametric models such as stochastic volatility inspired (SVI) to compute local volatility. The framework has extensions that ensure arbitrage is not violated in the local volatility surface and ensure that it is consistent with the Heston model in the limits \citep{gatheral-2011,gatheral-2014}.

Given market calibrated Heston parameters the implied volatility computed from the Heston model is arbitrage free at calibrated $T$ and $K$. However, no-arbitrage can be violated when we extrapolate or interpolate volatility points to other maturities and strikes. The computation of arbitrage free volatility surfaces is important because an arbitrage opportunity in the option market allows trading strategies that are implemented at zero cost and provides only upside potential. Such opportunities implies there are price mismatches in the market which may be exploited and cause a loss to the option writer or holder.

Recently, neural networks and machine learning models have been developed to compute the Black-Scholes implied volatility in real and synthetic markets \citep{liu-2019-2,hernandez-2016,poggio-2017,spiegeleer-2018,dimitroff-2018,horvath-2021,liu-2019-1,hirsa-2019}. Methods using convolutional neural networks (CNNs) has been explored in \citep{hernandez-2016} and \citep{dimitroff-2018}. They showed that CNNs performed well; however, they needed to be redesigned for specific models. Deep artificial neural networks (ANNs) such as the rough volatility method \citep{horvath-2021}, the implied volatility ANN (IV-ANN) method \citep{liu-2019-1}, the calibrating neural network (CaNN) method \citep{liu-2019-2}, and the method in \citep{hirsa-2019} are used to learn the solution of the option pricing function and construct the implied volatility surface by applying an inverse method or another neural network. The methods mentioned above do not account for the possible arbitrage violation of the neural network as there is no guarantee that the no-arbitrage conditions are not violated with generic neural networks. We also note that the inversion network of CaNN is model dependent as it uses the parameters of the model it is trying to calibrate as inputs used to learn the option price.

Neural network and machine learning models have been used in the computation of local volatility \citep{itkin-2019, chataigner-2021}. These methods typically account for arbitrage violations as arbitrage violation is a well known issue of local volatility, \citep{gatheral-2001-1}. The deep local volatility (DLV) method introduces a regularization term to the loss function to calibrate the implied volatility over an no-arbitrage option price surface. The method of \citep{itkin-2019} proposes an ANN with no-arbitrage constraints on learned parameters which guarantees that parameters calibrated on in-sample data adhere to no-arbitrage. Variational autoencoder (VAE) has been used in extrapolation and interpolation of local volatility surfaces \citep{bergeron-2022}. The VAE model was used to complete a local volatility surface where some of the local volatility points are known. The use of regularization terms to limit arbitrage violations is explored using VAEs using a similar approach to the DNN \citep{bergeron-2022}. A GAN model has been used in the calibration of local stochastic volatility (LSV) model parameters \citep{cuchiero-2020}. In comparison in this paper we look at computing volatility surfaces. The GAN calibration for LSV \citep{cuchiero-2020} also does not explore the use of regularization to enforce no-arbitrage conditions. We show later that including the no-arbitrage conditions as penalty terms has a noticeable affect on the volatility surface geometry. GAN for generating implied volatility with no-arbitrage penalty has been explored \citep{sidogi-2022}. However, the authors use the standard GAN loss function which results in irregular surfaces \citep{cont-2022}. We show in our numerical experiement that this results in volatility surfaces that are inconsistent with the market. VolGAN, a method to generate dynamic implied volatility and covariation from financial timeseries has been proposed \citep{cont-2022}. The authors also use the standard GAN loss with arbitrage penalties but use the log volatility to generate the surface which results in better performance. The inclusion of the mean-squared error (MSE) term in the loss function has been explored in the Fin-GAN framework \citep{vuletic-2023}.

Computing volatility surface from option prices often requires us to compute the option price. Though deep neural networks can successfully learn nonlinear functions, training deep networks requires a lot of data and can take a long time. In this paper, we propose using a generative adversarial network (GAN) to generate volatility surfaces from time to maturity, moneyness and at the money implied volatility. The generator network is assisted in training by a discriminator that evaluates whether the generated implied volatility matches the targets distribution or not. We derive a GAN framework from the inverse problem that is consistent with the target volatility using an MSE loss function \citep{vuletic-2023}. Our framework also trains our network to satisfy the no-arbitrage constraints by introducing penalties as regularization terms \citep{roper-2010, itkin-2019}. Although training a generator and a discriminator involves two networks, our proposed GAN model allows the use of shallow networks which results in much lower computational cost.

The contribution of this paper is as follows:
\begin{itemize}
\item We show that the use of a generator-discriminator pair allows us to train shallow networks to achieve greater efficiency without losing accuracy.
\item We propose a GAN based framework to compute volatility surfaces from synthetic market option prices generated by the Heston model. Our framework allows us to use the trained generator network to generate the implied volatility and the local volatility with minimal arbitrage violations out-of training, which is important for pricing and hedging options. 
\item We also show that our method can be used to generate market consistent volatility surfaces with minimal tuning of the generator on limited sample data by using a pre-trained discriminator.
\end{itemize}

\section{Heston Model, Volatility Surface and Static Arbitrage}\label{sec:background}
In the following section, we discuss the stochastic asset price model used to price the synthetic data used in this paper. Then we discuss implied volatility, local volatility and the computation of volatility surfaces. Finally, we discuss static arbitrage and present penalty terms that impose no-arbitrage as regularization on loss functions.

\subsection{Heston Model}
In this paper we model market asset prices using the Heston model. In standard Black-Scholes model, it assumes the asset price follows a geometric Brownian motion (GBM) which leads to the well known Black-Scholes equation for the no-arbitrage option price. However, it assumes the volatility is constant across maturities and strikes which may not hold in practice.

In this paper we use the Heston model to price the European call option which is used as synthetic data for our model training. We denote the call option price as $V$. The Heston model has the nice property that the characteristic function can be derived analytically which can be used in fast efficient solvers.  The asset price that follows the Heston model is given as the pair of SDEs \citep{heston-1993}
\begin{align*}
    dS_t &= r S_tdt +\sqrt{v_t}S_tdW_{t}^S \\
    dv_t &= \kappa(\bar{v} - v_0)dt +\gamma\sqrt{v_t}dW_{t}^v \\
    dW_{t}^S dW_{t}^v &=\rho dt,
\end{align*}
where $S_t$ is the stock price at time $t$, $r$ is the risk-free interest rate, $\kappa$ is the mean reversion rate, $\rho$ is the correlation between the stock price process and the variance process, $W_t^S$ is the Wiener process driving the stock price dynamics, $W_t^v$ is the Wiener process driving the variance process, $\gamma$ is the volatility of the variance, $\bar{v}$ is the long-run average of the variance, $v_t$ is the variance at time $t$ and $v_0$ is the initial variance.

We used the cosine (COS) method of \citep{fang-2009} to compute the European option price using the characteristic function of the Heston model. More precisely, let $i=\sqrt{-1}$, $\psi\in\mathbb{R}$ and let $\tau= T-t$ for any time $t\in[0,T]$.  The variables $\{\kappa,\rho,\gamma,v_0,\bar{v}\}$ are the Heston parameters. The explicit form of the characteristic function of the Heston model given by \citep{dunn-2014}
\[
    f(i\psi) = e^{A(\tau)+B(\tau)v_t+i\psi S_t}
\]
where
\begin{align*}
    A(\tau) &= ri\psi\tau + \frac{\kappa\bar{v}}{\gamma^2}\left(-(\rho\gamma i\psi-\kappa-M)\tau-2\ln\left(\frac{1-Ne^{M\tau}}{1-N}\right)\right)\\
    B(\tau) &= \frac{(e^{M\tau}-1)(\rho\gamma i \psi-\kappa-M)}{\gamma^2(1-e^{M\tau})}\\
    M &= \sqrt{(\rho\gamma i\psi + \kappa)^2+\gamma^2(i\psi + \psi^2)}\\
    N&= \frac{\rho\gamma i \psi -\kappa - M}{\rho\gamma i \psi -\kappa + M}.
\end{align*}

\subsection{Volatility Surface}

In this paper, we want to generate volatility surfaces through a trained generator network. This means we want our proposed model to generate the implied volatility and local volatility from our synthetic market prices.  In the following, we give an overview of the implied and local volatility and how they are computed.

\subsubsection{Implied Volatility}
Given the initial stock price $s_0$, the strike price $K$, the time to maturity $T$, the risk-free interest rate $r$, and the Heston parameters we can compute the arbitrage-free option price. Let $V_{mkt}(K,T)$ be the call option price at $T$ and $K$. In this paper we compute $V_{mkt}(K,T)$ using the Heston model. The implied volatility, $\sigma_{implied}(K,T)$, is found by solving the inverse problem
\begin{equation}
  V(S_0, K, T, r,\sigma_{implied}(K,T)) = V_{mkt}(K,T),
  \label{eq:calibration}
\end{equation}where the Black-Scholes call price, $V$, is found by solving the Black-Scholes equation \citep{liu-2019-1, gatheral-2001-1}. Note that the solution to the inverse problem, \eqref{eq:calibration}, can be reformulated as
\begin{equation}
V(S_0,K,T,r,\sigma_{implied}(K,T)) - V_{mkt}(K,T) =0.
\label{eq:root}
\end{equation}
Computing implied volatility is then reduced to the root finding problem given by \eqref{eq:root}. Two common root finding methods are Newton's method and Brent's method \citep{brent-1973}. In this paper, we will use Brent's method for solving \eqref{eq:root}.

\subsubsection{Local Volatility}
Given the stock price $S$, under local volatility the call option price follows the dynamics given by
\begin{equation*}
	\frac{\partial V}{\partial t} + \frac{1}{2}\sigma_{local}(S,t)^2S^2\frac{\partial^2 V}{\partial S^2}+rS\frac{\partial V}{\partial S} - rV = 0
\end{equation*}with terminal condition at $T$ is $V = \max(S-K,0)$. To change the dependent variables of the local volatility to $\sigma_{local}(K,T)$ we can form the equivalent forward equation called the Dupire equation \citep{dupire-1994}:
\begin{equation}
	\frac{\partial V}{\partial T} + \frac{1}{2}\sigma_{local}(K,T)^2K^2\frac{\partial^2 V}{\partial K^2}+rK\frac{\partial V}{\partial K} = 0
	\label{eq:dupire}
\end{equation}
with initial condition $V(K,0) = \max(S_0-K,0)$ \citep{lee-2002,gatheral-2001-1,gatheral-2001-2}. Let $\partial_T=\frac{\partial}{\partial T}$, $\partial_K=\frac{\partial}{\partial K}$, and $\partial_{KK}=\frac{\partial^2}{\partial K^2}$. Given the option price surface $V(K,T)$, in terms of $K$, and $T$ we can define the local volatility as
\begin{equation*}
	\sigma_{local}(K,T)=\left(\frac{\partial_TV+rK\partial_KV}{0.5K^2\partial_{KK}V}\right)^{1/2}.
\end{equation*}For a grid of $V_{mkt}$, $K$, and $T$ we can approximate  $\sigma_{local}$ from market option prices using finite-difference method (FDM). For discretized maturity $T_j$ with size $\Delta T$ and strike $K_i$ with step size $\Delta K$ the approximations to the partial derivatives are as follows
\begin{align*}
	\partial_T V_{i,j} &\approx \frac{V_{i,j}-V_{i,j-1}}{\Delta T}\\
    \partial_K V_{i,j} &\approx \frac{V_{i,j}-V_{i-1,j}}{\Delta K}\\
    \partial_{KK}V_{i,j} &\approx \frac{V_{i+1,j}-2V_{i,j}+V_{i-1,j}}{\Delta K^2},
\end{align*}then the discretized approximation of $\sigma_{local}(K,T)$ is given by
\begin{equation*}
	\sigma_{local}(K_i,T_j) \approx \sigma_{FDM}(K_i,T_j) = \left(\frac{\partial_TV_{i,j}+r_{i,j}K_{j}\partial_KV_{i,j}}{K_{j}^2\partial_{KK}V_{i,j}}\right)^{1/2}.
\end{equation*}Though the FDM method can be convenient, it has an issue with the regularity and availability of market data. To overcome the issue of limited market price data practitioners look at the parametric approximation given by the surface stochastic volatility inspired (SSVI) method \citep{gatheral-2011,gatheral-2014}.
\begin{definition}
	Let $\theta_T = \sigma_{implied}(s_0,T)^2T$ and let $\phi$ be a smooth function from $\mathbb{R}_{+} \mapsto\mathbb{R}_{+}$ such that the limit of $\theta_T\phi(\theta_T)$ exists as $T\rightarrow 0$. Then the SSVI is the surface defined by \citep{gatheral-2014}
    \begin{equation}
    	w(k,\theta_T) = \frac{\theta_T}{2}\left(1+\rho\phi(\theta_T)k+\sqrt{(\phi(\theta_T)+\rho)^2+(1-\rho^2)}\right).
    \label{eq:heston_like_ssvi}
    \end{equation}
\end{definition}
Note the parameter $\rho$ is the same as the Heston parameter $\rho$. To solve for the local volatility surface we let $\lambda > 0$ and use the Heston-like parametric function given by \citep{gatheral-2014}
\begin{equation}
	\phi(\theta) = \frac{1}{\lambda\theta}\left(1-\frac{1-e^{-\lambda\theta}}{\lambda\theta}\right).
\label{eq:heston-parametric-function}
\end{equation}

\subsection{Static Arbitrage}
The absence of static arbitrage in option prices ensures the fair valuation of the option for a fixed $T$. In this paper we refer to static arbitrage as arbitrage. We can say an option price surface or a volatility surface is free of arbitrage if and only if \citep{gatheral-2014}
\begin{enumerate}
\item it is free of calendar spread arbitrage (monotonicity);
\item it is free of butterfly spread arbitrage.
\end{enumerate}
The absence of calendar spread implies that monotonicity constraints of the option price/volatility surface is satisfied over $T$. The absence of butterfly arbitrage ensures that there exists a non-negative probability measure \citep{gatheral-2014}. This ensures that the option price/volatility is a martingale and arbitrage free.

Generally, arbitrage violation are found by checking that the option price/volatility surface violates one or both of these conditions \citep{carr-2005}, which are often imposed by constraints. In this paper, we distinguish two types of constraints. Hard constraints are constraints that must be satisfied to be a feasible solution. Alternatively, soft constraints are constraints that can be violated, but incur a penalty. This penalty term is then added as a regularization term to the objective function as shown in \citep{itkin-2019} and \citep{ackerer-2020}. Note that this method does not guarantee that all arbitrage opportunities will be eliminated.

There are two methods to impose soft constraints for no-arbitrage. The first method is to impose it on the option price surface directly \citep{itkin-2019}. The second method is to impose the soft constraints through the implied volatility surface \citep{ackerer-2020}.  We review both methods here, as the no-arbitrage on implied volatility surface builds on the work on price surfaces.

\subsubsection{No-Arbitrage Conditions imposed on Option Price Surface}
Let $V(K,T):[0,\infty)\times[0,\infty)\rightarrow\mathbb{R}$ be the call price surface. For $V(K,T)$ to be arbitrage free it must have the following properties \citep{carr-2005}:
\begin{equation}
    \partial_T V > 0,\partial_K V < 0, \partial_{KK}V > 0.
    \label{eq:narb-hard}
\end{equation}
The first constraint of \eqref{eq:narb-hard} ensures that calendar arbitrage (monotonicity) is not violated. The second and the third ensures that the butterfly arbitrage is not violated.

The constraints\eqref{eq:narb-hard} are applied to the option price surface to ensure that the option price surface is arbitrage-free \citep{carr-2005}. Neural networks are very difficult to train with hard constraints and tend to result in extremely large prediction errors \citep{ackerer-2020}. \citep{itkin-2019} treats the constraints as soft constraints and derives penalty functions to limit arbitrage violations. For given constants $\delta_1>0,\delta_2>0,\delta_3 > 0$ and the approximate option price, $\hat{V}$, the regularization terms are given by \citep{itkin-2019}:
\begin{equation}
    L_1 = \delta_1 \max\{-T^2\partial_T\hat{V},0\}, L_2 = \delta_2 \max\{-K^2\partial_{KK}\hat{V},0\}, L_3 = \delta_3 \max\{K\partial_K\hat{V},0\}.
    \label{eq:narb-soft}
\end{equation}

Another approach to ensure no-arbitrage via regularization is by treating the Dupire condition \citep{dupire-1994} as a soft constraint \citep{chataigner-2020}. Let $k = \log(K/s_0)$, $\partial_k = \frac{\partial}{\partial k}$ and $\partial_{kk} = \frac{\partial^2}{\partial k^2}$ then the Dupire penalty is given by:
\begin{equation*}
    L_{dup} = \frac{\partial_T V}{k^2\partial_{kk}V}.
\end{equation*}
It is clear that the Dupire penalty requires the conditions $\partial_T V \geq 0$ and $\partial_{kk} V > 0$. This may not hold true if relying on auto-differentiation. However, this may be remedied by using a similar treatment on the put option surface as \eqref{eq:narb-soft}. We remark that for the DLV method the local volatility is computed as $\sigma_{local}(k,T) = (L_{dup})^{1/2}$. Note that the soft constraints can not guarantee that the calibrated implied volatility will be arbitrage free as it is not guaranteed to completely eliminate arbitrage in the option price surface. 

\subsubsection{No-Arbitrage Conditions imposed on Volatility Surface}
Let the scaled volatility surface of $k$ be denoted by $I(k,T):\mathbb{R}\times[0,\infty)\mapsto[0,\infty]$. We can build on the no-arbitrage conditions on $V(k,T)$, and extend it on to $I(k,T) = \sigma(k,T)\sqrt{T}$. The sufficient conditions for $I(k,T)$ to be arbitrage free are \citep{roper-2010}
\begin{itemize}
    \item (Smoothness): $I(k,T)$ is twice differentiable for every $T>0$;
    \item (Positivity): for every $k\in\mathbb{R}$ and $T>0$, $I(k,T) > 0$;
    \item (Durrleman's Condition): for every $k\in\mathbb{R}$ and $T>0$, 
        \begin{equation}
            0\leq (1 - \frac{k\partial_kI}{I})^2-\frac{1}{4}(\partial_kI)^2+I\partial_{kk}I,
            \label{eq:DC}
        \end{equation}
        where $I = I(k,T)$;
    \item (Monotonicity in $T$): for every $k\in\mathbb{R}$, $I(k,\cdot)$ is non-decreasing;
    \item (Large moneyness behaviour): for every $T>0$, $\limsup\limits_{k\rightarrow \infty}\frac{I(k,T)}{\sqrt{2}k}\in[0,1)$; and 
    \item (Value at maturity): for every $k\in\mathbb{R}$, $I(k,0) = 0$.
\end{itemize}
Note that the butterfly spread no-arbitrage conditions are satisfied from \eqref{eq:DC} and calendar spread no-arbitrage conditions are satisfied by the monotonicity condition.

Extending this approach we can show the constraints to eliminate calendar and butterfly spread can be expressed as soft constraints when calibrating for the volatility surface $\sigma(k,T)$ \citep{ackerer-2020}. We let the total variance be defined as
\[
\omega(k, T) =\sigma^2(k,T) T.
\]
Let $\ell_{cal}$ be the risk in the total variance from the calendar spread arbitrage and $\ell_{but}$ be the risk in the total variance from the butterfly spread arbitrage, which is given by
\begin{align*}
\ell_{cal}(k,T) &=\partial_T \omega(k,T) \\
\ell_{but}(k,T) &= \left( 1- \frac{k \partial_k \omega (k,T)}{2 \omega(k,T))}\right)^2 -\frac{\partial_k \omega(k,T)}{4} \left( \frac{1}{\omega(k,T)}+\frac{1}{4} \right)+ \frac{\partial^2_{kk} \omega(k,T)}{2}.
\end{align*}The proof of why $\ell_{but} \geq 0$ ensures the butterfly spread arbitrage is not violated  is shown in \citep{gatheral-2014}.

Then we can define the penalty to satisfy the no-arbitrage conditions as \citep{ackerer-2020} 
\begin{itemize}
\item \textbf{(Monotonicity in $T$):}
\[
L_c = \frac{1}{M} \sum_{i} \max (0,-\ell_{cal}(k_i, T_i)),\ i=1,...,M
\]
\item \textbf{(Durrleman's condition):}
\[
L_{bf} = \frac{1}{M} \sum_{i} \max (0,-\ell_{but}(k_i, T_i)),\ i=1,...,M
\]
\item \textbf{(Large moneyness limit):}
\[
L_\infty = \frac{1}{M} \sum_{i} |\partial^2_{kk} \omega(k_i,T_i)|,\ i=1,...,M
\]
\end{itemize}
In our method we add these penalty terms to the objective function of the generator, which we show in Section \ref{sec:method}.

\section{GAN framework for Computing Volatility Surfaces}\label{sec:method}
In this section we present our proposed model used to compute volatility surfaces. Our proposed framework uses a generative-adversarial network (GAN) model to compute no-arbitrage volatility surfaces. The GAN model is composed of two neural networks, the generator network and the discriminator network. The generator learns to generate volatility surfaces that have minimal arbitrage violations. The discriminator network is trained with on volatility surfaces over different time to maturities and interest rates. The discriminator learns to classify the given data as true if the data is from the distribution of the volatility surface and false if it is not. Concurrently, the generator learns to generate data that is consistent with the distribution of the labels by minimizing detection from the discriminator network. This competition between the generator and discriminator allows the GAN to generate out-of-training samples that closely mimic the target in distribution \citep{goodfellow-2016}. Our GAN framework allows us to use smaller more efficient networks for the generator. Our proposed model can be used to compute both the local volatility and the implied volatility. We remark that our method is not dependent on the Heston model. To allow our model to train on limited number of observations, we engineer two additional features. We use an adjusted log-moneyness to enrich the feature space and the at the money (ATM) implied volatility as a form of target encoding.
\begin{figure}[htbp]
    \centering
    \includegraphics[width=0.9\textwidth]{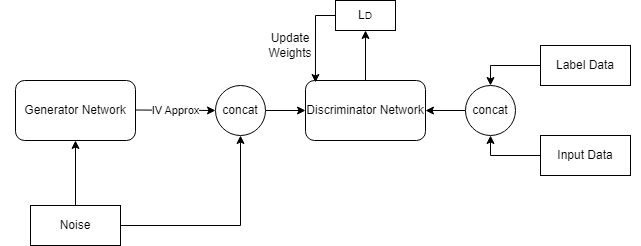}
    \caption{Schematic of discriminator network during training. The discriminator is trained with the Black-Scholes implied volatility (BS-IV) as the true labels and the generator output with noise as the fake labels. The generator weights are fixed while the discriminator is being trained.}
    \label{fig:GAN-disc}
\end{figure}

\begin{figure}[htbp]
    \centering
    \includegraphics[width=0.7\textwidth]{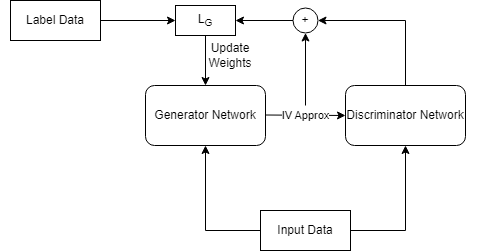}
    \caption{Schematic of generator network during training. The ground truth is given by the Black-Scholes implied volatility. The discriminator weights are fixed during the training of the generator network.}
    \label{fig:GAN}
\end{figure}

\subsection{GAN Formulation of the Inverse Problem}

We formally formulate the inverse problem that will be solved by our proposed GAN framework  to compute the complete implied and local volatility surfaces. Given a grid of strikes and maturities and the set of volatilities $\mathbb{Q}$, the objective function of the inverse problem requires us to solve:
\begin{equation}
    \arg\min_{\sigma\in\mathbb{Q}}\{\|V(S_0,K,T,r,\sigma(K,T))-V_{mkt}(K,T)\|_2^2\}=0.
    \label{eq:inverse_problem}
\end{equation}
\subsubsection{Implied volatility}
We formulate the inverse problem for the implied volatility directly, as it is given by definition. Given $\sigma_{mkt}(K_i,T_j)$ from $V_{mkt}(K_i,T_j)$, we can reformulate the inverse problem as:
\begin{equation}
    \min_{\sigma\in\mathbb{Q}}\{\|\sigma(K,T)-\sigma_{mkt}(K,T)\|_2^2\}=0.
    \label{eq:minmax-iv}
\end{equation}
To solve the implied volatility problem, it suffices to solve \eqref{eq:minmax-iv}. However, this may lead to values of $\sigma$ that may not be arbitrage free. Let $\mathbb{Q}_{implied} \subset \mathbb{Q}$ is the space of $\sigma(K,T)$ such that they are arbitrage free. By selecting $\sigma(K,T)$ such that it is most likely sampled from $\mathbb{Q}_{implied}$ we get the following minimax problem:
\begin{equation}
    \max_{\mathbb{Q}_{implied}\subset \mathbb{Q}}\min_{\sigma \in \mathbb{Q}_{implied}}\{\|\sigma(K,T) - \sigma_{mkt}(K,T)\|_2^2\} = 0,
    \label{eq:GAN-minmax-iv}
\end{equation}
we denote the resulting $\sigma^*(K,T)$ as $\sigma_{implied}(K,T)$.

\subsubsection{Local volatility}
To formulate the local volatility as an inverse problem in terms of volatility surfaces is not so clear. We present a Lemma that connects the local volatility surface to the inverse problem.
\begin{lemma}
Let $V_{mkt}$ be the market price of the option and $V$ its approximation. We assume that $\frac{\partial V_{mkt}}{\partial T}$, $\frac{\partial V_{mkt}}{\partial K}$, $\frac{\partial^2 V_{mkt}}{\partial K^2}$, $\frac{\partial V}{\partial T}$, $\frac{\partial V}{\partial K}$, $\frac{\partial^2 V}{\partial K^2}$ exists and $V$, $V_{mkt}$ satisfies \eqref{eq:dupire}. Under the assumption that there exists an optimal $\sigma^*(K,T)$ such that it satisfies \eqref{eq:inverse_problem}. Let $H(K,T,V,V_{mkt}) = \delta(V(K,T),V_{mkt}(K,T))$ and $\Gamma(K,T,\sigma,\sigma_{mkt}) = \delta(\sigma(K,T),\sigma_{mkt}(K,T))$, where $\delta(x,x_0) = \|x-x_0\|_2$ is the L2 distance between $x$ and $x_0$. Then we can form a controlled KFP equation given by measure $\delta$:
\begin{equation}
    \frac{\partial H}{\partial T}+2rK\frac{\partial H}{\partial K} +\frac{(V-V_{mkt})}{2}\min_{\sigma}\{\Gamma(K,T,\sigma)^2\}K^2\frac{\partial^2 H}{\partial K^2} = 0,
    \label{eq:kfp}
\end{equation}
with initial condition $H(K,0,V,V_{mkt}) = \delta(V(K,0),V_{mkt}(K,0))$. 
\end{lemma}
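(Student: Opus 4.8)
The plan is to treat $H$ as a composite function of the two prices $V$ and $V_{mkt}$, each of which is itself a function of $(K,T)$, and to push both through the Dupire equation \eqref{eq:dupire}. First I would record the two Dupire identities, one for $V$ with local volatility $\sigma$ and one for $V_{mkt}$ with local volatility $\sigma_{mkt}$, and solve each for its time derivative:
\begin{equation*}
\frac{\partial V}{\partial T} = -rK\frac{\partial V}{\partial K} - \tfrac{1}{2}\sigma^2 K^2\frac{\partial^2 V}{\partial K^2}, \qquad \frac{\partial V_{mkt}}{\partial T} = -rK\frac{\partial V_{mkt}}{\partial K} - \tfrac{1}{2}\sigma_{mkt}^2 K^2\frac{\partial^2 V_{mkt}}{\partial K^2}.
\end{equation*}

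Next I would differentiate $H = \delta(V,V_{mkt})$ by the chain rule, writing $\partial_T H = \partial_V H\,\partial_T V + \partial_{V_{mkt}} H\,\partial_T V_{mkt}$ and similarly for the spatial derivatives $\partial_K H$ and $\partial_{KK} H$. Substituting the two time derivatives above turns $\partial_T H$ into a combination of first- and second-order $K$-derivatives of $V$ and $V_{mkt}$, weighted by $rK$ and by $\tfrac12\sigma^2 K^2$, $\tfrac12\sigma_{mkt}^2 K^2$ respectively. The advective pieces should assemble into the transport term, while the diffusive pieces must be re-expressed in terms of $\partial_{KK}H$.

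The decisive step is rewriting the mismatched second-order term $\sigma^2\partial_{KK}V - \sigma_{mkt}^2\partial_{KK}V_{mkt}$ as a single coefficient multiplying $\partial_{KK}H$. Here I would add and subtract to split the expression into a part proportional to $\partial_{KK}(V - V_{mkt})$ and a part proportional to the squared volatility gap $\sigma^2 - \sigma_{mkt}^2$; invoking the optimality hypothesis (existence of a $\sigma^*$ solving \eqref{eq:inverse_problem}) lets me replace the free volatility in the residual by the minimizer, so that the leftover factor is exactly $\min_\sigma\Gamma(K,T,\sigma)^2$ while the price gap contributes the factor $(V - V_{mkt})$. Matching this against the reassembled transport term then yields the advection coefficient $2rK$ and the controlled diffusion coefficient $\tfrac{(V-V_{mkt})}{2}\min_\sigma\{\Gamma^2\}K^2$, which is \eqref{eq:kfp}. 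The initial condition is immediate: evaluating at $T=0$ and using the shared Dupire initial data $V(K,0)=V_{mkt}(K,0)=\max(s_0-K,0)$ gives $H(K,0,V,V_{mkt}) = \delta(V(K,0),V_{mkt}(K,0))$.

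I expect the coefficient bookkeeping to be the main obstacle, and two points need particular care. The derivatives of the distance $\delta$ carry the sign of $V - V_{mkt}$, so I must verify that these signs are consistent across the transport and diffusion terms and do not spoil the stated form; and the passage from $\sigma^2 - \sigma_{mkt}^2$ to $\min_\sigma\Gamma^2$ is where the optimality assumption does all the work, so I would state precisely in what sense the minimizing $\sigma^*$ drives the cross terms to zero. These are exactly the places where the factor $2$ in $2rK$ and the product structure $(V-V_{mkt})\min_\sigma\{\Gamma^2\}$ are pinned down, so the argument stands or falls on handling them cleanly rather than on any deep estimate.
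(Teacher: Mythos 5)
Your high-level skeleton --- write both prices through the Dupire equation \eqref{eq:dupire}, chain-rule $H=\delta(V,V_{mkt})$ in $T$ and $K$, and read off coefficients --- is the same as the paper's, and your handling of the initial condition is fine (the paper reaches it by integrating \eqref{eq:dupire} in $T$; you evaluate at $T=0$; both are immediate). The gap is in your decisive step. Your add-and-subtract split gives
\begin{equation*}
\sigma^2\,\partial_{KK}V-\sigma_{mkt}^2\,\partial_{KK}V_{mkt}
=\sigma_{mkt}^2\,\partial_{KK}(V-V_{mkt})+(\sigma^2-\sigma_{mkt}^2)\,\partial_{KK}V,
\end{equation*}
so the piece you can relate to $\partial_{KK}H$ carries the coefficient $\sigma_{mkt}^2$ (or $\sigma^2$, for the other split), while the residual $(\sigma^2-\sigma_{mkt}^2)\,\partial_{KK}V$ is \emph{not} proportional to $\partial_{KK}H$. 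The optimality hypothesis --- existence of a $\sigma^*$ solving \eqref{eq:inverse_problem} --- gives you no license to replace $\partial_{KK}V$ (a derivative of the price surface) by anything; it selects a volatility, it does not transmute price curvature into $H$ curvature. So your route cannot produce the stated diffusion coefficient $\frac{(V-V_{mkt})}{2}\,[\sigma^2-\sigma_{mkt}^2]\,K^2$ on $\partial_{KK}H$, which is exactly the structure the lemma asserts.

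The paper's mechanism at this point is different: it multiplies and divides the two diffusion terms by $H$ and invokes the identities $\partial_VH=(V-V_{mkt})/H=-\partial_{V_{mkt}}H$ together with $\partial^2_{VV}H=1/H$ and $\partial^2_{V_{mkt}V_{mkt}}H=-1/H$; it is precisely the opposite signs of these second-derivative factors that let the two diffusion pieces combine into the antisymmetric coefficient $(V-V_{mkt})[\sigma^2-\sigma_{mkt}^2]/2$ multiplying $\partial_{KK}H$. Two further mismatches with your plan, both at the spots you yourself flagged as delicate. First, the factor $2rK$: in the paper it arises because each transport piece, $\partial_VH\,\partial_KV$ and $\partial_{V_{mkt}}H\,\partial_KV_{mkt}$, is separately identified with $\partial_KH$, so the two sum to $2rK\,\partial_KH$; the direct chain-rule matching you propose yields $rK\,\partial_KH$ once, so you would derive an equation different from \eqref{eq:kfp}. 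Second, the minimization: in the paper it is not a cancellation device used mid-derivation. The paper first relabels $\sigma^2-\sigma_{mkt}^2=\bigl(\sqrt{\sigma^2-\sigma_{mkt}^2}\bigr)^2=\Gamma^2$ (a definition, not a consequence of optimality), and only afterwards appends $\min_\sigma\Gamma^2$ as the selection forced by the inverse problem, namely that $H\to0$ as $\Gamma\to0$. Using the minimizer to ``drive the cross terms to zero,'' as you propose, is where your argument breaks down.
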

\begin{proof}
For $\delta = \|\cdot\|_2$, the distance between the market value and its approximation is given by $H=\delta(V, V_{mkt})$. We can construct the KFP equation. From chain rule
\begin{align*}
\frac{\partial H}{\partial T} &= \frac{\partial H}{\partial V}\frac{\partial V}{\partial T} + \frac{\partial H}{\partial V_{mkt}}\frac{\partial V_{mkt}}{\partial T}\\
&= \frac{\partial H}{\partial V}(-rK\frac{\partial V}{\partial K} - \frac{\sigma^2 K^2}{2}\frac{\partial^2 V}{\partial K^2}) +\frac{\partial H}{\partial V_{mkt}}(-rK\frac{\partial V_{mkt}}{\partial K} - \frac{\sigma_{mkt}^2 K^2}{2}\frac{\partial^2 V_{mkt}}{\partial K^2})\\
& = -2rK\frac{\partial H}{\partial K} - \frac{\sigma^2 K^2}{2}\frac{\partial H}{\partial V}\frac{\partial^2 V}{\partial K^2} - \frac{\sigma_{mkt}^2 K^2}{2}\frac{\partial H}{\partial V_{mkt}}\frac{\partial^2 V_{mkt}}{\partial K^2},
\end{align*}
which gives us the equation
\begin{equation}
    \frac{\partial H}{\partial T} + 2rK\frac{\partial H}{\partial K} + \frac{\sigma^2 K^2}{2}\frac{\partial H}{\partial V}\frac{\partial^2 V}{\partial K^2} + \frac{\sigma_{mkt}^2 K^2}{2}\frac{\partial H}{\partial V_{mkt}}\frac{\partial^2 V_{mkt}}{\partial K^2} = 0.
    \label{eq:intermediate_step}
\end{equation}
Next, we multiply and divide both sides of \eqref{eq:intermediate_step} by $H$. This gives us:
\begin{equation}
    \frac{\partial H}{\partial T} + 2rK\frac{\partial H}{\partial K} + \frac{\sigma^2 K^2}{2}(H)\frac{\partial H}{\partial V}(1/H)\frac{\partial^2 V}{\partial K^2} + \frac{\sigma_{mkt}^2 K^2}{2}(H)\frac{\partial H}{\partial V_{mkt}}(1/H)\frac{\partial^2 V_{mkt}}{\partial K^2} = 0,
    \label{eq:intermediate_step2}
\end{equation}
from the derivative of $H$ we can see that
\[
    \frac{\partial H}{\partial V} = \frac{V - V_{mkt}}{H} = -\frac{\partial H}{\partial V_{mkt}},
\]
\[
    \frac{\partial^2 H}{\partial V^2} = 1/H, \frac{\partial^2 H}{\partial V_{mkt}^2} = -1/H,
\]
applying the first derivative and the definition of the second derivative to \eqref{eq:intermediate_step2} we get
\[
    \frac{\partial H}{\partial T} + 2rK\frac{\partial H}{\partial K} + \frac{(V-V_{mkt}) K^2}{2}[\sigma^2 - \sigma_{mkt}^2]\frac{\partial^2 H}{\partial K^2} = 0,
\]
we can rewrite $\sigma^2 - \sigma_{mkt}^2$ as $\left(\sqrt{\sigma^2-\sigma_{mkt}^2}\right)^2 = \Gamma(\sigma,\sigma_{mkt})^2$. The initial condition can be derived as follows:
\begin{align*}
    \delta(V,V_{mkt}) &= \delta(\int_{T} \frac{\partial V}{\partial T}+rK\frac{\partial V}{\partial K} +\frac{1}{2}\sigma(K,T)^2K^2\frac{\partial^2 V}{\partial K^2}dT + V(K,0,\sigma)\\
    &,\int_{T} \frac{\partial V_{mkt}}{\partial T}+rK\frac{\partial V_{mkt}}{\partial K} +\frac{1}{2}\sigma_{mkt}(K,T)^2K_i^2\frac{\partial^2 V_{mkt}}{\partial K^2}dT + V_{mkt}(K,0,\sigma_{mkt}))\\
    &= \delta(V(K,0,\sigma),V_{mkt}(K,0,\sigma_{mkt})) = H(K,0,V,V_{mkt}).
\end{align*}
We know from the inverse problem that for each $T$ and $K$ we want 
\[
    H(V,V_{mkt}) \rightarrow 0\ \textnormal{as}\ \Gamma(\sigma,\sigma_{mkt}) \rightarrow 0,
\]
this tells us we need $\min_{\sigma}(\Gamma(\sigma,\sigma_{mkt})^2)$.
\end{proof}
Note, a connection exists between the forward Kolmogorov equation and the Hamilton-Jacobi-Bellman (HJB) equation \citep{annunziato-2014}. More, specifically, when the cost functional of the adjoint equation equals $d=\|\cdot\|_2$ and the $\sigma^*(K,T)$ is a strong solution, then the HJB equation is equal to the adjoint equation \citep{annunziato-2014}.

Let $\mathbb{Q}_{local} \subset \mathbb{Q}$ be space of volatilities $\sigma(K,T)$ such that they are arbitrage free and computed uniquely from Dupire's Equation. The optimal solution is found by taking the first order condition with respect to $\sigma$.
\[
    \min_{\sigma \in \mathbb{Q}}\{\|\sigma(K,T) - \sigma_{mkt}(K,T)\|_2^2\} = 0.
\]
By restricting the possible $\sigma$ to the space of arbitrage free volatilities, the inverse problem can be solved by solving the following problem:
\begin{equation}
    \max_{\mathbb{Q}_{local}\subset \mathbb{Q}}\min_{\sigma \in \mathbb{Q}_{local}}\{\|\sigma(K,T) - \sigma_{mkt}(K,T)\|_2^2\} = 0.
    \label{eq:GAN-minmax-loc}
\end{equation}
We denote the resulting $\sigma^*(K,T)$ as $\sigma_{local}(K,T)$. In our formulation, the $\max$ is solved using GAN which learns to accept solutions in $\mathbb{Q}_{implied}$ and $\mathbb{Q}_{local}$. The $\min$ is solved using the $MSE$ loss and no-arbitrage penalty terms given that $\sigma(K,T)$ belongs to the restricted set of volatilities $\mathbb{Q}_{implied}$ and $\mathbb{Q}_{local}$. We present the loss function explicitly in the next section. Note in practice, $\sigma_{mkt}$ is computed by finite differences.

\subsection{Loss Function and Proposed Model}

In this section we present our GAN model more formally. Our proposed model can be used for computing the implied and local volatility. For the implied volatility, the input of our proposed model is composed of: the risk free rate $r$, the moneyness $k = K/s_0$, the maturity time $T$ and the at the money (ATM) volatility $\sigma_{ATM} = \sigma_{implied}(s_0,T)$ and the adjusted log-moneyness $k_{log} = \log(k) - rT$. The adjusted log-moneyness is chosen to enrich the feature space by adding the log of the moneyness we account for skewness in the moneyness data, which is highly likely in volatility data as there may be different volatility levels given $T$. We also choose to include $\sigma_{ATM}$ as a form of target encoding. This is because under the limit at $T\rightarrow t$ the implied volatility converges to the spot volatility almost surely \citep{durrleman-2008}. Note this is a consistency condition for the model to be consistent with spot volatility \citep{durrleman-2008, carmona-2007}.

First, we construct the loss function to approximate the implied volatility. For each $T \in \mathbb{R}^{M}$ we draw $r$ from the range $[0.0,0.05]$. We let $k \in \mathbb{R}^{M}$ be the moneyness, $\sigma_{ATM}\in\mathbb{R}{+}^{M}$ be the ATM implied volatility and $k_{log} \in \mathbb{R}^{M}$ be the adjusted log-moneyness. We let $[X_1,...,X_N]^\top$ be the input to the GAN where $X_i = \{k,\sigma_{ATM},T,r\mathbf{1},k_{log}\}$, where $\mathbf{1}$ is a vector of ones of dimension $M$. Let $d = 5MN$ be the size of $X$. We let $b=MN$ be the size of the label data $y \in \mathbb{R}^{b}$ used in training, i.e. $\sigma_{implied}$ from Black-Scholes.

Let $G:\mathbb{R}^{d} \mapsto \mathbb{R}^{b}$ be the generator function and $D:\mathbb{R}^{d+b} \mapsto [0,1]^{b}$ be the discriminator function and let $z\sim N(0,1)$ such that $Z = [z_1,...,z_d]$. In a standard GAN model the learning objective is to ensure the distribution of $G(Z)$ is similar to the distribution of $y$. In our approach we do not use standard noise $Z$, but instead a shifted and scaled noise $\bar{Z} = [\bar{z}_1,...,\bar{z}_d]$, where $\bar{z}\sim N(\mu_X,\sigma_{X})$. The mean and standard deviation of X is estimated for each feature as the sample mean and standard deviation for the input $X$. This is done to ensure the noise is within the domain of  the input $X$. The standard loss function for the GAN is given by the minimax problem \citep{goodfellow-2016}
\begin{equation*}
	\min_{G}\max_{D} L(G,D) = \min_G\max_D \left\{\mathbb{E}[\log(D(\{X,y\}))] + \mathbb{E}[1-\log(D(\{\bar{Z},G(\bar{Z})\}))]\right\}
\end{equation*}
Solving the standard GAN loss function is difficult so in practice we reformulate the the standard loss function as two minimization problems given by
\begin{align*}
	\min_D L(D) &= \min_D  \{-\mathbb{E}[\log(D(\{X,y\}))]\}\\
    \min_G L(G) &= \min_G \{\mathbb{E}[1-\log(D(\{\bar{Z},G(\bar{Z})\}))]\}.
\end{align*}
These two loss functions are solved iteratively where the $G$ is fixed when minimizing $L(D)$ and $D$ is fixed when minimizing $L(G)$. 

We connect the minimax problems \eqref{eq:GAN-minmax-iv} and \eqref{eq:GAN-minmax-loc} to the standard GAN. The generator $G$ outputs possible values of $\sigma$ from a given input.  The discriminator $D$, classifies $\sigma$ as a volatility that is in $\mathbb{Q}_{implied}$, where $target = \{implied, local\}$
or not, this restricts the possible values of $\sigma$ such that they are consistent with $\mathbb{Q}_{target}$. The MSE loss is used to solve the inverse problem and map $\sigma$ to $\sigma_{target}$. Note that, we use lower case for scalar values, upper case for vector values and underlined upper case for matrices.

We formulate our generator loss function similarly to \citep{horvath-2021} and \citep{ackerer-2020} and use the mean squared error (MSE) to learn the shape of the volatility surface. For $b$ different samples the MSE is given by
\begin{equation}
    MSE(y, G(X)) = \frac{1}{b}\sum_{i = 1}^{b} (y_i - G(X_i))^2
\label{eq:rmse}
\end{equation}
To generate arbitrage-free implied volatility, we incorporate the no-arbitrage conditions $L_c, L_{bf}$ and $L_\infty$ in the loss function of the generator. To ensure that the $G(X)$ has a similar distribution to $y$ we minimize the negative log-likelihood loss function given by
\begin{equation*}
    L_{D_G} = -\frac{1}{b} \sum_{i = 1}^b \log(D(\{\bar{Z}_i,G(\bar{Z}_i)\})).
\end{equation*}
Then the loss function for the generator is given by
\begin{equation}
    L_G = MSE(y, G(X)) + \lambda_1 L_c + \lambda_2 L_{bf} + \lambda_3 L_\infty + \lambda_4 L_{D_G}.
\label{eq:generator_loss}
\end{equation}
The parameters $\lambda_1 > 0$, $\lambda_2 > 0$, and $\lambda_3 > 0$ are used to determine the amount of calendar arbitrage, butterfly arbitrage and the limit behaviour that are penalized. $\lambda_4 \in [0,1]$ is the amount of similarity we want with the distribution of $y$. Note that the additional terms act as regularizers to the standard MSE loss function. 

For the discriminator network, we use the binary cross-entropy (BCE) loss function \citep{goodfellow-2016} given by
\begin{equation}
    L_{D}= -\frac{1}{b} \sum_{i = 1}^b \left(\log(D(\{X_i,y_i\}))  + \log (1-D(\{X_i,G(X_i)\}))\right),
\label{eq:bce}
\end{equation}
The discriminator loss is chosen to maximize the likelihood that the discriminator classifies the target values as true given some noisy inputs. Then we want to minimize $L_D$ and $L_G$ iteratively such that $D^* = \min_D L_D$ and $G^* = \min_G L_G$. 

For the local volatility the loss function is the same except the input of our proposed model is given by $X_i =\{k,\sigma_{ATM},\sigma_{implied},T,r\mathbf{1},k_{log}\}$ and the label $y = \sigma_{local}$.

In our framework we model $G$ and $D$ using feedforward neural networks. We define the generator network as $\mathcal{G}(\cdot;\Omega)$ with a set of parameters $\Omega$. The discriminator network as $\mathcal{D}(\cdot;\Theta)$ with a set of parameters $\Theta$.  The generator network we use is the ANN with $l=1,2$ layers. We did not use a deep neural network as other works on volatility computation such as \citep{liu-2019-1} and \citep{chataigner-2020}. We found that increasing the depth past $l=2$ did not affect the accuracy of the generator network. We remark that depth past $l=1$ in the discriminator network actually lead to a degradation of performance as it was over-fitting on synthetic data. inputs of the network are normalized by the mean and standard deviation. Let $W_l$ be the weights of the $l$-th layer neural network. Then $\mathcal{D}(\{X,y\};\Theta)$ with parameters $\Theta = [W_0,W_1]$ is represented by
\begin{align*}
	&Z_1 = \textnormal{softplus}(\textnormal{batchnorm}(\{X,y\} W_0);\beta)\\
    &\mathcal{D}(\{X,y\};\Theta) = \textnormal{sigmoid}(Z_1 W_1).
\end{align*}
We represent GAN-1 with a $l=1$ layer $\mathcal{G}_1(X;\Omega_1)$, with parameters $\Omega_1 = [W_0,W_1]$ by
\begin{align*}
	&Z_1 = \textnormal{softplus}(\textnormal{batchnorm}(X W_0);\beta)\\
    &\mathcal{G}_1(X;\Omega_1) = \textnormal{softplus}(Z_1 W_1;\beta),
\end{align*}
and represent GAN-2 with a $l=2$ layer $\mathcal{G}_2(X;\Omega_2)$, with parameters $\Omega_2 = [W_0,W_1,W_2]$ as
\begin{align*}
	&Z_1 = \textnormal{softplus}(\textnormal{batchnorm}((X W_0);\beta)\\
    &Z_2 = \textnormal{softplus}(\textnormal{batchnorm}((Z_1W_1);\beta)\\
    &\mathcal{G}_2(X;\Omega_2) = \textnormal{softplus}(Z_2 W_2;\beta).
\end{align*}
We use a scaled version of the softplus activation function and sigmoid activation function defined as \citep{dugas-2001, goodfellow-2016}
\begin{align*}
    \textnormal{softplus}(X;\beta) &= \frac{1}{\beta}\log(1+e^{\beta X}),\\
    \textnormal{sigmoid}(X) &= \frac{1}{1+e^{-X}}.
\end{align*}
The softplus activation function is used for the generator network output because we want a smooth enough function to learn no-arbitrage soft constraints \citep{dugas-2001}. The sigmoid activation function for the discriminator output is standard for classification problems \citep{goodfellow-2016}.  We also employ batch normalization in each layer with learnable parameters $\gamma$ and $\eta$. The input $X$ is batch normalized as follows \citep{goodfellow-2016}:
\[
    batchnorm(X) = \gamma\frac{X-\mathbb{E}[X]}{std[X]} + \eta.
\]
We train the discriminator network first using $G(\bar{Z};\Omega)$ with $\Omega$ fixed as shown in figure \ref{fig:GAN-disc}. Then we train the generator network $\mathcal{G}(X;\Omega)$ using $\mathcal{D}(\{X,\mathcal{G}(X;\Omega)\};\Theta)$ with $\Theta$ fixed as shown in figure \ref{fig:GAN}. Reparameterizing \eqref{eq:generator_loss} and \eqref{eq:bce} by the neural network parameters $\{\Omega, \Theta\}$ gives us our final  loss function for the discriminator as
\begin{equation}
	\mathcal{L}_\mathcal{D}(\Theta) = -\frac{1}{b} \sum_{i = 1}^b \left(\log(\mathcal{D}(\{X_i,y_i\};\Theta))  + \log (1-\mathcal{D}(\{X_i,G(X_i)\};\Theta))\right),
 \label{eq:d_net_loss}
\end{equation}
 and for the generator as
 \begin{equation}
	\mathcal{L}_{\mathcal{G}}(\Omega) = MSE(y, \mathcal{G}(X;\Omega)) + \lambda_1 L_c + \lambda_2 L_{bf} + \lambda_3 L_\infty + \lambda_4 L_{\mathcal{D}_\mathcal{G}}(\Omega),
 \label{eq:g_net_loss}
\end{equation}
where 
\[
	L_{\mathcal{D}_\mathcal{G}}(\Omega) = -\frac{1}{b} \sum_{i = 1}^b \log(\mathcal{D}(\{\bar{Z}_i,\mathcal{G}(\bar{Z}_i;\Omega)\})).
 \]
The optimal set of discriminator network parameters is denoted by $\Theta^*$ and is found by
\begin{equation*}
    \Theta^* = \textnormal{arg}\min_{\Theta}\{\mathcal{L}_D(\Theta)\}.
\end{equation*}
The optimal set of generator network parameters is denoted by $\Omega^*$, which is found by minimizing \eqref{eq:generator_loss} this gives us
\[
    \Omega^* = \textnormal{arg}\min_{\Omega}\{\mathcal{L}_G(\Omega)\}
\]

\subsection{Training and Summary of GAN}
Training the GAN requires two steps. The first step is to train the discriminator. However, loss \eqref{eq:d_net_loss} is not straightforward to compute directly. Instead, the discriminator is trained using  algorithm \ref{alg:d_net_train}.
\begin{algorithm}[htpb]
\caption{Discriminator training algorithm}\label{alg:d_net_train}
\begin{algorithmic}
\Require $X,y,\mathcal{G}(\Omega^*), optimizer, epoch$
\State $n \gets 0$
\While{$n < epoch$}
\For{batch}
    \State $D := \mathcal{D}(\Theta)$
    \State $G:=\mathcal{G}(\Omega^*)$
    \State $\mathcal{L}_{\mathcal{D}_1}(\Theta) = -\mathbb{E}[\log(D(\{X,y\}))]$
    \State $\mathcal{L}_{\mathcal{D}_2}(\Theta) = -\mathbb{E}[\log(1- D(\{X,G(X)\}))]$
    \State $\mathcal{L}_{\mathcal{D}}(\Theta) = \mathcal{L}_{\mathcal{D}_1}(\Theta) + \mathcal{L}_{\mathcal{D}_2}(\Theta)$
    \State propogate errors backwards through the network.
    \State $optimizer$ step
\EndFor
\State $n \gets n + 1$
\EndWhile
\end{algorithmic}
\end{algorithm}
The total loss function $\mathcal{L}_\mathcal{D}(\Theta) = \mathcal{L}_{\mathcal{D}_1}(\Theta) + \mathcal{L}_{\mathcal{D}_2}(\Theta)$ is minimized using the Adam \citep{kingma-2014} optimizer.

In the second step, given $X$ and $y$ we train the generator by minimizing \eqref{eq:g_net_loss} using Adam. This is done in a standard neural network approach. Training is performed for $50$ epochs. Once training is complete, we evaluate the GAN with a forward pass using the testing data. 

A detailed summary of our training pipeline is summarized as follows
\begin{enumerate}
\item The Heston parameters are generated using a uniform distribution over a range of values as shown in table \ref{tab:data_gen} and the option price and target volatility surface $Y$ is computed as shown in figure \ref{fig:data_gen_pipeline}. Input $X$ and $Z$ is constructed for $M$ samples. 
\item Then $X, Z$ and $y$ are given as inputs to $\mathcal{D}(\cdot;\Theta)$.
\item $\mathcal{D}(\cdot;\Theta)$ is trained by evaluating \eqref{eq:d_net_loss} and $\mathcal{G}(X;\Omega)$ is trained by evaluating \eqref{eq:g_net_loss}.
\item The trained network $\mathcal{G}(X;\Omega)$ approximates the volatility surface.
\item We carry forward the weights of the generator from the previous epoch to initialize the weights of the next epoch to reinforce the soft constraints.
\end{enumerate}
The architecture and training parameters are detailed in table \ref{tab:model_info}.
\begin{table}[!htbp]
\centering
\caption{Model Parameters used in GAN for GAN-1 and GAN-2.}
\begin{tabular}{lc}
\hline
\textbf{Parameters} & \textbf{Options}  \\
\hline
Neurons(each layer) & $100$ \\
Activation function & softplus ($\beta=1$), sigmoid\\
Dropout rate & $0.0$ \\
Batch-normalization & No \\
Optimizer & Adam \citep{kingma-2014} \\
Batch size & 128 \\
\hline
\end{tabular}
\label{tab:model_info}
\end{table}
\subsection{Generating Data}\label{sec:data}
Our proposed model can be used to compute volatility surfaces, which are two separate problems thus they require different input features for training and testing. In this paper, our option price is generated using the Heston model with different combination of parameters. We use synthetic data over real data as it is difficult to obtain good quality real data for all $T$ and $K$. In this section we provide more details on the training and testing data used in this paper. In this paper we generate three separate sets of data. One set used for training and testing of our model. A second set for generating out-of training volatility surface and a third set for testing price errors.

\subsubsection{Generating Volatility Data}
We generate synthetic market data for training our GAN framework. The characteristic function of the Heston model was used with the COS method to generate the European call price, of an underlying asset following the geometric Brownian motion. Then the Black-Scholes implied volatility was computed using Brent's method. The finite-difference method was used to approximate the local volatility from European call prices.

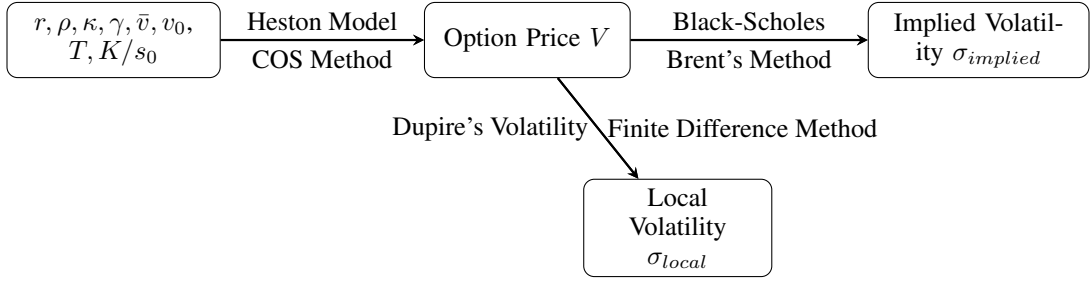
\begin{figure}[htbp]
\begin{center}
\begin{tikzpicture}[node distance=2cm]
    \node (inputs) [startstop, text width=2.6cm] {$r, \rho, \kappa, \gamma, \bar{v}, v_0$, $T, K/s_0$};
    \node (result1) [startstop, text width=2.5cm, right of=inputs, xshift=3.5cm] {Option Price $V$};
    \node (result2) [startstop, text width=2.7cm, right of=result1, xshift=4cm] {Implied Volatility $\sigma_{implied}$};
    \node (result3)[startstop, text width= 2.2cm, right of=result1, yshift=-2.5cm]{Local Volatility $\sigma_{local}$};
    
    \draw [arrow] (inputs) -- node[anchor=south] {Heston Model} (result1);
    \draw [arrow] (inputs) -- node[anchor=north] {COS Method} (result1);
    \draw [arrow] (result1) -- node[anchor=south] {Black-Scholes} (result2);
    \draw [arrow] (result1) -- node[anchor=north] {Brent's Method} (result2);
    \draw [arrow] (result1) -- node[below, midway, anchor=east] {Dupire's Volatility} (result3);
    \draw [arrow] (result1) -- node[below, midway, anchor=west] {Finite Difference Method} (result3);
\end{tikzpicture}
\end{center}
\caption{Pipeline used to generate data for training and testing our model.}
\label{fig:data_gen_pipeline}
\end{figure}

For training and validating, we use the parameters shown in table \ref{tab:data_gen} in our simulations. Each parameter was sampled randomly from the uniformly distributed parameters.

\begin{table}[htbp]
\centering
\caption{Domain of parameters used in the simulation of training and testing data, all parameters were drawn randomly from the parameter space.}
\begin{tabular}{l|c}
\hline
\textbf{Parameter} & \textbf{Range}\\
\hline
$r$, risk-free interest rate & $(0.0, 0.05)$\\
$\kappa$, reversion speed & $(0.0, 3.0)$\\
$\rho$, correlation & $(-0.9, 0.0)$\\
$\gamma$, volatility of variance & $(0.01, 0.5)$\\
$\bar{v}$, long-run mean variance & $(0.01, 0.5)$\\
$v_0$, initial variance & $(0.05, 0.5)$\\
$K/s_0$, moneyness & $(0.5, 2.5)$\\
$T$, time to maturity & $(0.5, 2.0)$\\
\hline
\end{tabular}
\label{tab:data_gen}
\end{table}
For training and validating, we generated $10$ different combinations of parameters above. For each set of parameters, we generated the option price and implied volatility for $75$ different maturity dates, and $50$ different strikes. 

For testing out-of training volatility surface we use $1$ set of parameters over $11$ different maturities and $157$ different strikes with parameters shown in table \ref{tab:data_gen_out}. For price error we use $50$ different sets of parameters, $8$ different maturities and $11$ different strikes with parameters shown in table \ref{tab:data_gen}.

\begin{table}[htbp]
\centering
\caption{Domain of parameters used in out-of training volatility surface generation, all parameters were drawn uniformly from the domain.}
\begin{tabular}{l|c}
\hline
\textbf{Parameter} & \textbf{Range}\\
\hline
$r$, risk-free interest rate & $0.02$\\
$\kappa$, reversion speed & $2.7$\\
$\rho$, correlation &$ -0.4$\\
$\gamma$, volatility of variance & $0.2$\\
$\bar{v}$, long-run mean variance & $0.4$\\
$v_0$, initial variance & $0.4$\\
$K/s_0$, moneyness & $(0.3, 2.8)$\\
$T$, time to maturity & $(0.3,2.0)$\\
\hline
\end{tabular}
\label{tab:data_gen_out}
\end{table}

\subsubsection{Volatility Data From Call options on the S\&P 500}
We also test our GAN framework on market index options built on the S\&P 500 index. The options dataset has the following features. The strike, moneyness, bid price, mid price, ask price, last price, volume, implied volatility and time traded. We gathered data for $T = 0.25, 0.5, 0.75, 1$. The interest rate $r$ was set to the $1$-year treasury bond rate of $5.463\%$. We collected a total of $821$ data points of moneyness $\in [0.24,1.17]$ and time-to-maturity $\in \{0.25, 0.5, 0.75, 1.0\}$.

\section{Numerical Results}\label{sec:Results}
In this section, we present our experimental results. Our experiment is divided into two main categories. First we show the that our proposed method can compute the Heston implied volatility surface. To measure the accuracy of our method we use the absolute percent error (MAPE) as a measure of performance. The MAPE is given by
\begin{equation*}
    MAPE = \frac{1}{b} \sum_{i = 1}^b \frac{|\sigma_{BS,i} - \sigma_{implied,i}|}{\sigma_{BS,i}}.
\end{equation*}
We also evaluate another measure of performance, the mean absolute error(MAE), which is given by
\begin{equation*}
    MAE = \frac{1}{b} \sum_{i = 1}^b |\sigma_{BS,i} - \sigma_{implied,i}|.
\end{equation*}Next we show that our proposed method can compute the local volatility surface. Note that the local volatility surface is not unique thus it is difficult to measure the quality directly. Instead we use repricing error as a method to measure the quality of our method \citep{horvath-2021,chataigner-2020,chataigner-2021}. This done by computing the implied volatility from the local volatility since
\[
    \hat{\sigma} = \sqrt{\sigma_{local}/T}.
\]
To measure the quality of our method we compute the average relative pricing error (ARPE) of the European option given by
\begin{equation*}
	ARPE = \frac{1}{b}\sum_{i=1}^{b}\frac{|V_{mkt,i}-V_{local,i}|}{|V_{mkt,i}|},
\end{equation*}
where  $V_{local}$ is the Black-Scholes European call option price computed using the local volatility. We use two other metrics to measure the performance of our proposed method for local volatility. We use the maximum relative price error (MRPE) given by
\begin{equation*}
	MRPE = \max_{i=1,...,b}\frac{|V_{mkt,i}-V_{local,i}|}{|V_{mkt,i}|},
\end{equation*}
and the standard deviation of relative price error. Note that all three measures are presented as error heatmaps \citep{horvath-2021}. In all of our experiments, we use generated data as described in Section \ref{sec:method}. In experiment 1, we compare the performance of GAN-1 vs. GAN-2 and show qualitative results which compares the output of our method to the implied volatility smile generated using Brent's method. We also compare our model with and without soft constraints. In experiment 2, we compare our proposed methods with the IV-ANN method of \citep{liu-2019-1}, we chose not to compare with the deep calibration \citep{horvath-2021} approach as the network architecture used in this model is captured by the IV-ANN method. We compute the implied volatility and present a qualitative and quantitative analysis. We also look at the repricing error of our method and the IV-ANN method to replicate $V_{mkt}$. In experiment 3, we compare our proposed method with a deep neural network implementation similar to the DNN \citep{chataigner-2020}. We compute the local volatility using both methods and compare the qualitative and quantitative results of both methods. In experiment 4 we  compare our method to a VAE implementation. In this experiment we compare the local volatility of both methods given a set computation time and evaluate the performance of both methods. Finally, in experiment 5 we use our pre-trained discriminator and fine-tune a 2-layer generator to generate market consistent volatility surfaces, this highlights our models capabilities to generalize to other datasets with minimal tuning with small number of samples (less than 1000). All our numerical experiments were run using Google Colab with 13 GB of RAM and a dual-core CPU of 2.2 GHz.

\subsection{Experiment 1: Performance Comparison between GAN models}

In this experiment, we compare the performance of the two GAN models proposed in our paper. We use MAE and MAPE to measure the performance of both models for implied volatility. For the local volatility we use the ARPE, MRPE and the standard deviation of repricing error heatmaps. To begin our comparison, we compare the training performance of GAN-1 and GAN-2. The training procedures for GAN-1 and GAN-2 are identical and are described in section \ref{sec:method}. 

\subsubsection{Implied Volatility Smile}
To show the effectiveness of our approach, we present the MAE and MAPE of the generator, which is a measure of how well our generated implied volatility matches the training and validation true implied volatility. We see that the training and testing loss for both of our models behave similarly for the generator. We summarize the training performance of GAN-1 and GAN-2 in table {\ref{tab:final_model_performance}}, which shows the MAE and MAPE of both models and their respective training times. We notice that GAN-1 trains faster than GAN-2 by $10$ seconds but GAN-2 tends to yield better accuracy than GAN-1 in both MAE and MAPE. However, only looking at error does not give us a full picture of the performance of our proposed method. Thus we also compare more qualitative results.

\begin{table}[htbp]
\centering
\caption{We compare the performance of GAN-1 and GAN-2 using training time, MAE and MAPE.}
\begin{tabular}{lccc}
\hline
\textbf{Model} & \textbf{Training Time} & \textbf{MAE} & \textbf{MAPE} \\
\hline
GAN-1 & 198.689732$s$ & $4.2680e^{-5}$ & $0.07898\%$\\
GAN-2 & 208.098838$s$ & $2.1376e^{-5}$ & $0.03981\%$\\
\hline
\end{tabular}
\label{tab:final_model_performance}
\end{table}

We compare the qualitative performance of our proposed models GAN-1 (dashed red) and GAN-2 (dashed green) for different maturities as shown in figure \ref{fig:comp_models_fit}. We show a cross section of the implied volatility surface for different maturities to highlight the tail behaviour of our generated surface. Our target value shown by the solid blue line was constructed from the implied volatility. The red dashed lines are the implied volatilities output from GAN-1 and the green dashed lines are the implied volatilities output from GAN-2. We can see that the implied volatility approximated by GAN-2 follows the implied volatility curve but does not match it completely. The implied volatility approximated by GAN-1 fails to capture the curvature of the implied volatility curve in maturities further out.  Though GAN is a powerful tool the generator network still requires some hidden layers to perform well. However, the training time required for GAN-2 is only $10$ seconds more than GAN-1 showing that it is still efficient.

\begin{figure}[htbp]
\centering
\includegraphics[width=\textwidth]{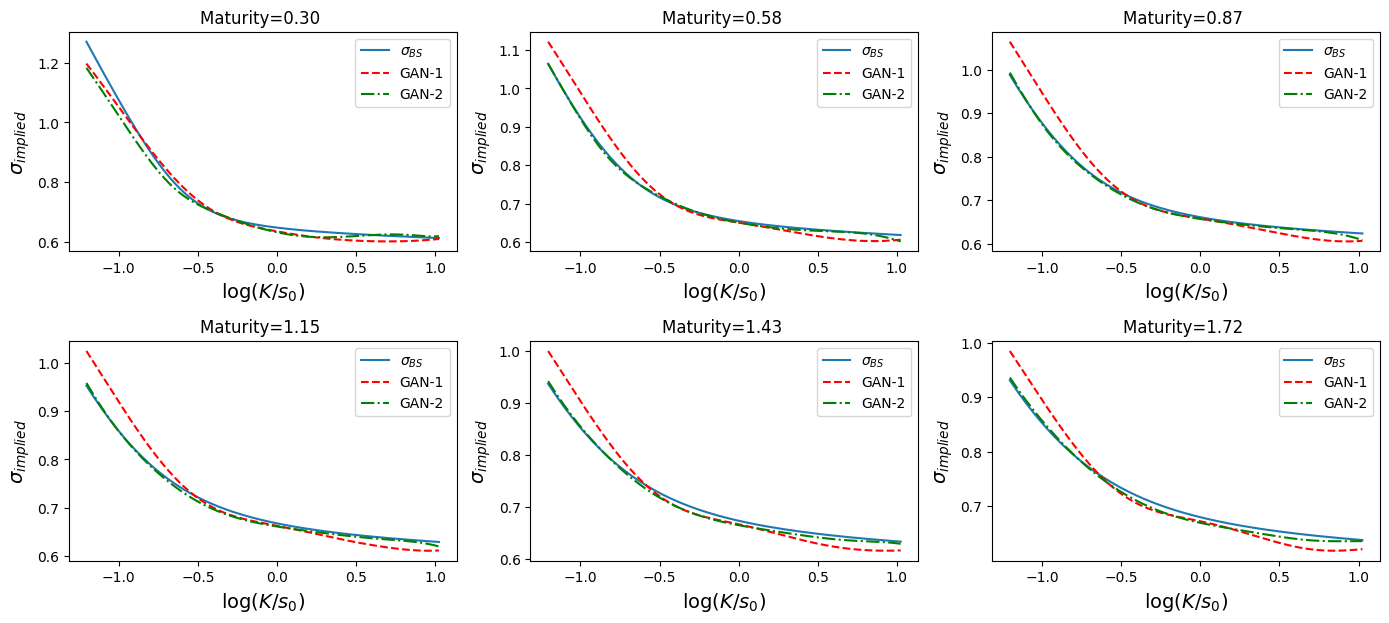}
\caption{The implied volatility computed by GAN-1 with soft constraints (red) and GAN-2 with soft constraints 
 (green) compared to the Black-Scholes implied volatility (blue).}
\label{fig:comp_models_fit}
\end{figure}

We compare the generated implied volatility from our models with soft constraints (dashed green) vs our models with out any constraints (dashed red) as shown in figures \ref{fig:fit_constraintsvnone} and \ref{fig:fit_constraintsvnone_gan2}.  As with figure \ref{fig:comp_models_fit}, we look at the cross section of the implied volatility surface at different maturities. We observe that the role of the regularizer as maturity increases. We see that the no-arbitrage penalty terms flatten the curve and allow for better fitting with the implied volatility curve. In the case with no constraints we can qualitatively observe some concavity in the implied volatility curves which suggests that arbitrage has been violated.

\begin{figure}[htbp]
\centering
\includegraphics[width=\textwidth]{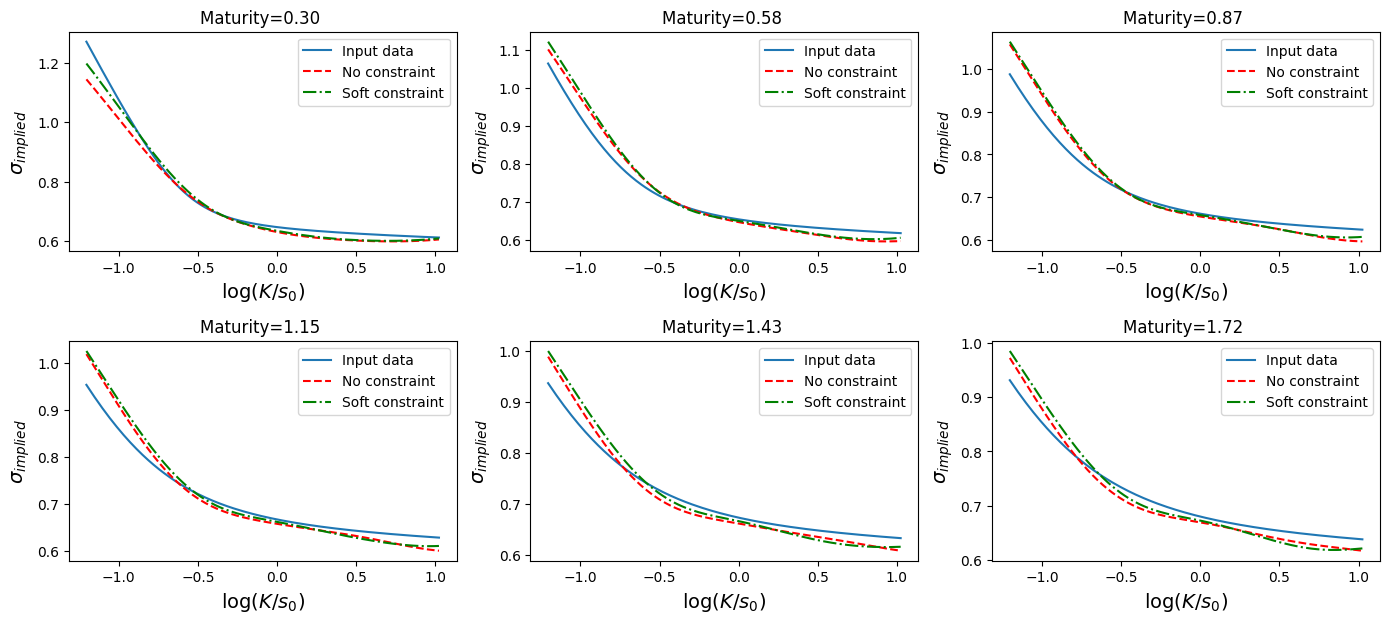}
\caption{The implied volatility approximated by GAN-1 with no constraints (orange) and GAN-1 with soft constraints (green) compared to the Black-Scholes implied volatility (blue).}
\label{fig:fit_constraintsvnone}
\end{figure}

\begin{figure}[htbp]
\centering
\includegraphics[width=\textwidth]{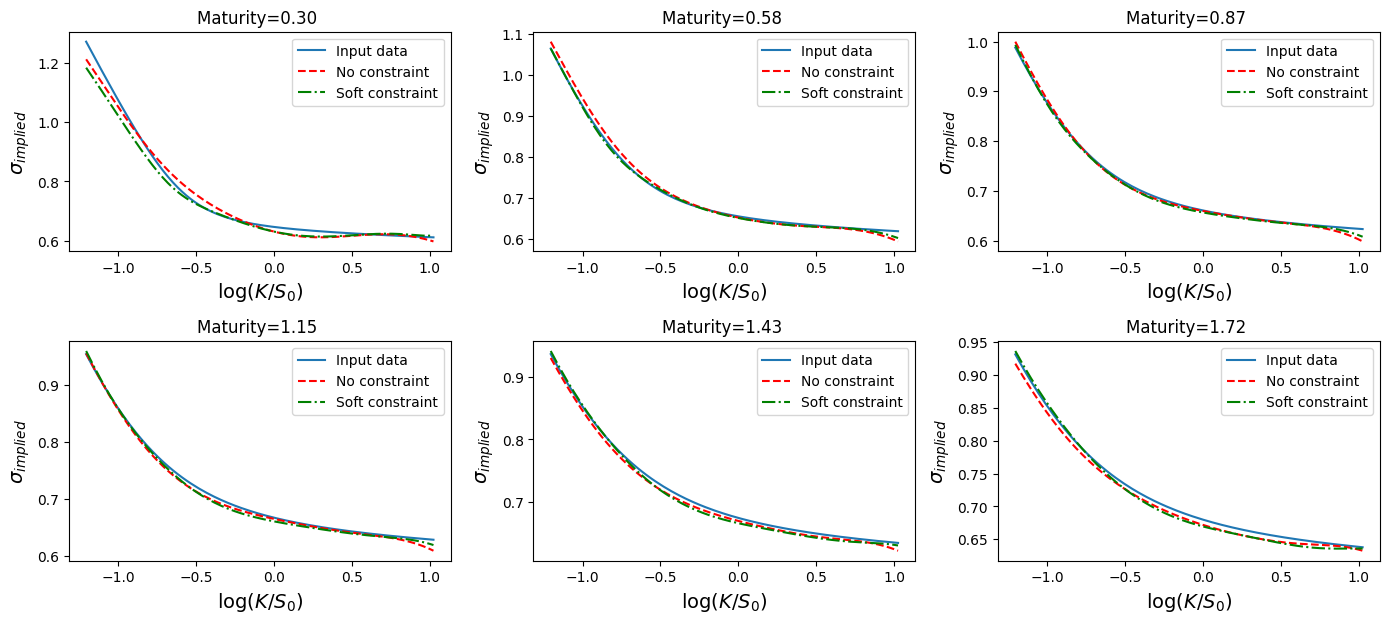}
\caption{The implied volatility approximated by GAN-2 with no constraints (orange) and GAN-2 with soft constraints (green) compared to the Black-Scholes implied volatility (blue).}
\label{fig:fit_constraintsvnone_gan2}
\end{figure}

We show the number of arbitrage violations during training and testing for both GAN models when soft constraints are used and when they are not used in tables \ref{tab:butterfly_arbitrage_violations} and \ref{tab:calendar_arbitrage_violations}. We found that adding the soft constraints greatly assisted the generator in avoiding arbitrage butterfly and vertical spread violations. From testing, we see that GAN-1 and GAN-2 without soft constraints performs considerably worse, as we had $4.90\%$ and $11.15\%$ arbitrage violations respectively. From table \ref{tab:butterfly_arbitrage_violations} we see a reduction in butterfly and vertical spread arbitrage violations when soft constraints are used. Note that the soft constraint does not guarantee that arbitrage violations do not occur.
\begin{table}[htbp]
\centering
\caption{Number of butterfly arbitrage violations detected in implied volatility surface.}
\begin{tabular}{lcccc}
\hline
\multicolumn{3}{c}{\textbf{GAN-1}} & \multicolumn{2}{c}{\textbf{GAN-2}} \\
\hline
& Soft Constraint & No Constraint & Soft Constraint & No Constraint \\
\hline
Training & 0/28306 & 1049/28306 & 0/28306 & 2143/28306 \\
Testing & 0/4995 & 245/4995 & 0/4995 & 557/4995 \\
\hline
\end{tabular}
\label{tab:butterfly_arbitrage_violations}
\end{table}

\begin{table}[htbp]
\centering
\caption{Number of calendar arbitrage violations detected in implied volatility surface.}
\begin{tabular}{lcccc}
\hline
\multicolumn{3}{c}{\textbf{GAN-1}} & \multicolumn{2}{c}{\textbf{GAN-2}} \\
\hline
& Soft Constraint & No Constraint & Soft Constraint & No Constraint \\
\hline
Training & 0/28306 & 0/28306 & 0/28306 & 0/28306 \\
Testing & 0/4995 & 0/4995 & 0/4995 & 0/4995 \\
\hline
\end{tabular}
\label{tab:calendar_arbitrage_violations}
\end{table}

\subsubsection{Local Volatility Smile}
Our proposed methods can learn the behaviour of the full local volatility with minimal arbitrage violations. We demonstrate the resulting local volatility surface computed by GAN-1 with soft constraints (left), GAN-2 with soft constraints (middle) and the FDM local volatility (right) in figure \ref{fig:comp_models_loc_vol}. This is done to show qualitatively how the generated local volatility surfaces compare to the approximate local volatility given by FDM. The surface generate by GAN-2 tends to try to fit the FDM local volatility more closely than GAN-1.
\begin{figure}[htbp]
\centering
\includegraphics[width=0.32\textwidth]{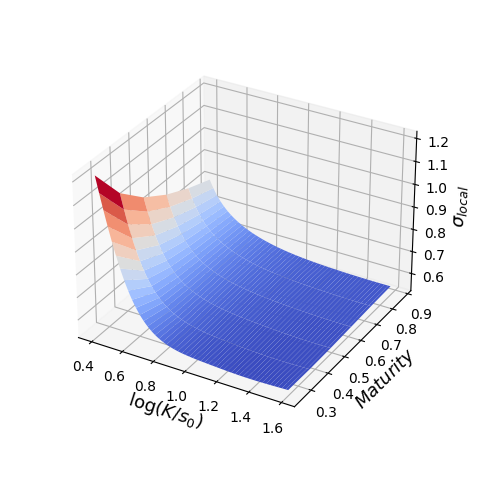}
\includegraphics[width=0.32\textwidth]{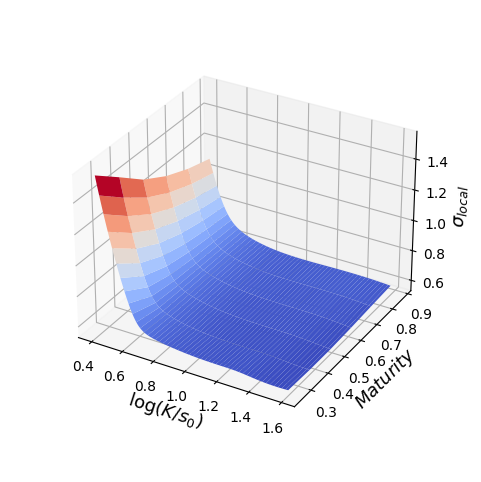}
\includegraphics[width=0.32\textwidth]{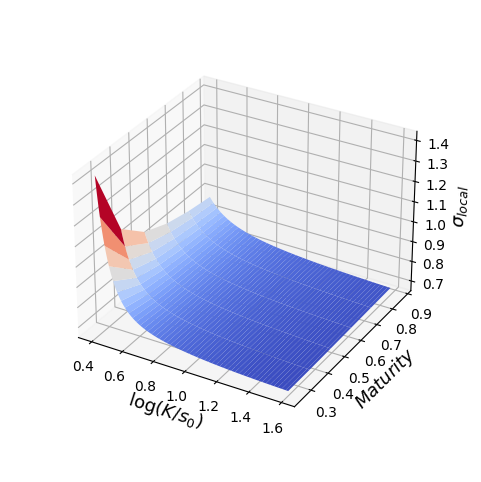}
\caption{The local volatility surface computed by GAN-1 with soft constraints (left), GAN-2 with soft constraints (middle) and FDM (right).}
\label{fig:comp_models_loc_vol}
\end{figure}

To compare our models quantitatively our generated local volatility was used to reprice the Heston option price and a error heatmap of ARPE, MRPE and the standard deviation of relative error as shown in figure \ref{fig:comp_models_loc_vol_err}. These heatmaps were generated by taking the relative error at for each maturity and strike. At each maturity and strike we computed the ARPE, MRPE and standard deviation for different sets of model parameters. From figure \ref{fig:comp_models_loc_vol_err} we can see that GAN-1 has a maximum ARPE of $0.1\%\pm0.175\%$ and a MRPE of $1.2\%$. GAN-2 has a maximum ARPE of $0.175\%\pm 0.175\%$ and a MRPE of $1\%$. We also note that the training time of GAN-1 $230.9626$ seconds and GAN-2 was $263.9366$ seconds. This opens up the possibility of using either models depending on the time constraints of the problem. The ARPE  of the FDM local volatility surface is $0.8\%\pm0.175\%$ and the MRPE is $14\%$.

\begin{figure}[htbp]
\centering
\includegraphics[width=\textwidth]{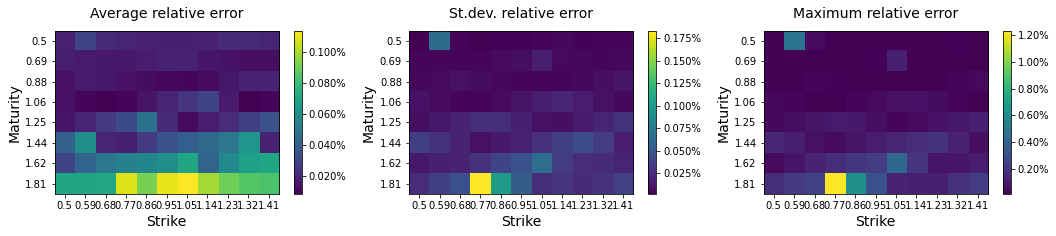}
\includegraphics[width=\textwidth]{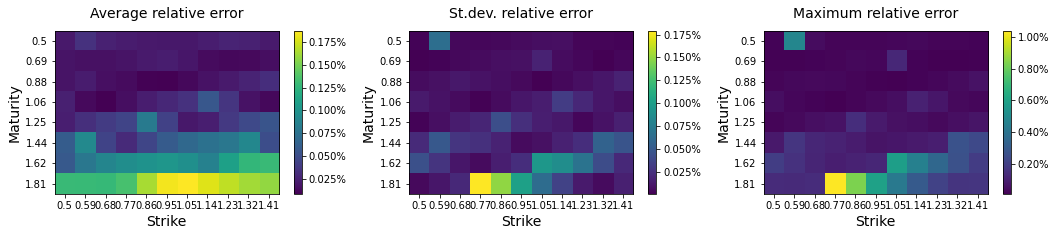}
\includegraphics[width=\textwidth]{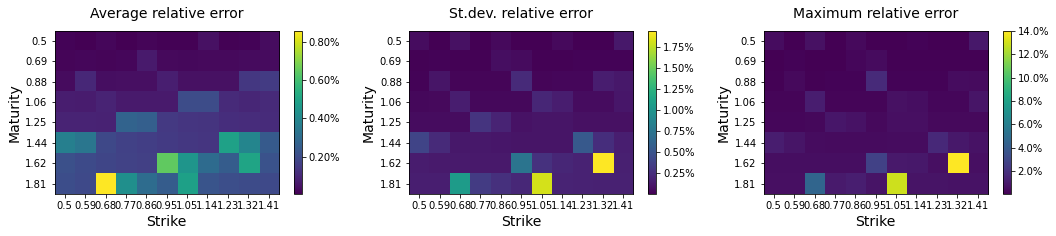}
\caption{Pricing error heatmap computed from local volatility.  GAN-1 with soft constraints (top), GAN-2 with soft constraints(middle) and FDM (bottom). The warmer colours (yellow) represent higher error values and colder colours (blue) represent lower error values.}
\label{fig:comp_models_loc_vol_err}
\end{figure}

In a similar fashion to the implied volatility, we also look at the arbitrage violations of our proposed methods for the local volatility. However, in contrast to the results of the implied volatility surface, the no-arbitrage soft constraints did not aid our proposed method in avoiding arbitrage violations significantly as seen in tables \ref{tab:butterfly_arbitrage_violations_loc_vol} and \ref{tab:calendar_arbitrage_violation_loc_vols}. We see that with or without soft constraints the generated local volatility does not violate butterfly arbitrage and effectively does not violate the calendar arbitrage in testing. However, this does not mean the soft constraints can be removed from training the generator. 

We look at the cross section of the local volatility surface at each maturity to see highlight the effects of soft constraints in generated local volatility surfaces as shown in figure \ref{fig:gan2_constraintsvnone}. The soft constraints influence the geometry of the generated local volatility surface. The effect on GAN-2 is the most prominent. Qualitatively, we see that there is a level of concavity when we generate the local volatility without soft constraints in the log moneyness region $[-0.3,0.3]$ at higher maturities. 

\begin{table}[htbp]
\centering
\caption{Number of butterfly arbitrage violations detected in local volatility surface.}
\begin{tabular}{lcccc}
\hline
\multicolumn{3}{c}{\textbf{GAN-1}} & \multicolumn{2}{c}{\textbf{GAN-2}} \\
\hline
& Soft Constraint & No Constraint & Soft Constraint & No Constraint \\
\hline
Training & 0/28306 & 0/28306 & 0/28306 & 0/28306 \\
Testing & 0/4995 & 0/4995 & 0/4995 & 0/4995 \\
\hline
\end{tabular}
\label{tab:butterfly_arbitrage_violations_loc_vol}
\end{table}

\begin{table}[htbp]
\centering
\caption{Number of calendar arbitrage violations detected in local volatility surface.}
\begin{tabular}{lcccc}
\hline
\multicolumn{3}{c}{\textbf{GAN-1}} & \multicolumn{2}{c}{\textbf{GAN-2}} \\
\hline
& Soft Constraint & No Constraint & Soft Constraint & No Constraint \\
\hline
Training & 0/28306 & 28/28306 & 0/28306 & 0/28306 \\
Testing & 0/4995 & 1/4995 & 0/4995 & 0/4995 \\
\hline
\end{tabular}
\label{tab:calendar_arbitrage_violation_loc_vols}
\end{table}

\begin{figure}[htbp]
\centering
\includegraphics[width=\textwidth]{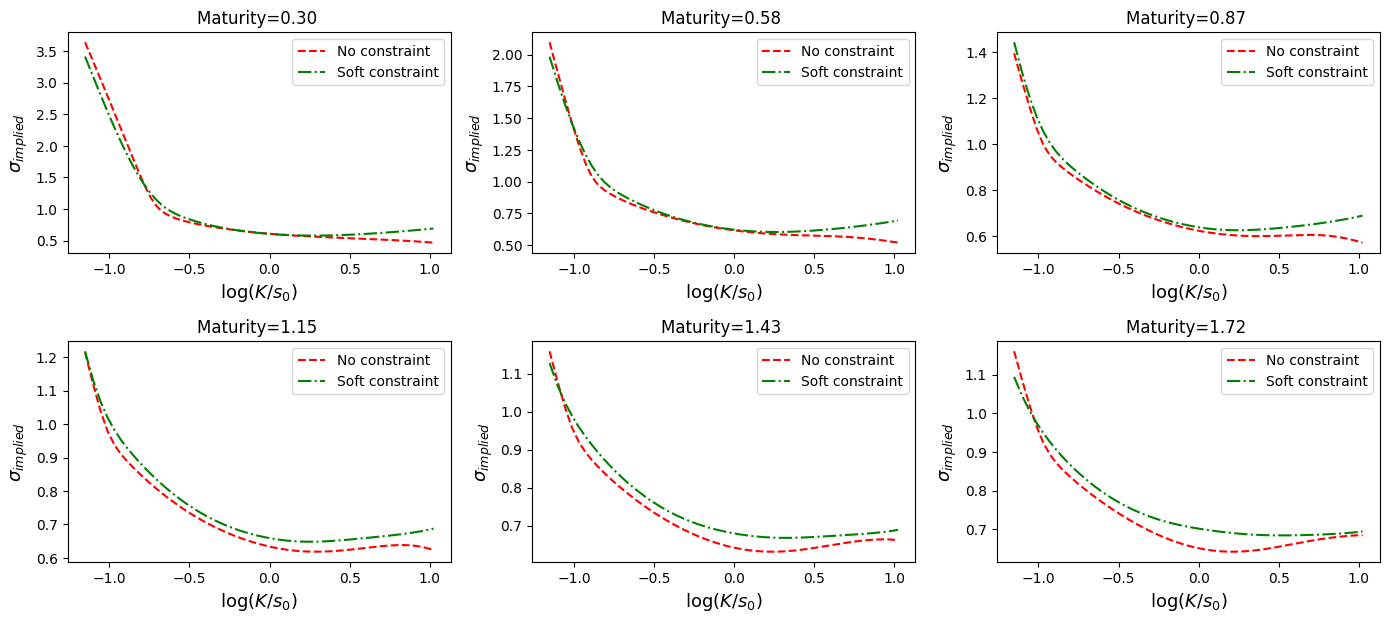}
\caption{The local volatility computed by GAN-2 with no constraints (red) and GAN-2 with soft constraints (green).}
\label{fig:gan2_constraintsvnone}
\end{figure}

\subsubsection{Comparison to GAN without MSE Loss}

We construct GAN-1 and GAN-2 without MSE loss function to compare our formulation with the formulation presented in \citep{sidogi-2022}. In our experiment we found that without the $MSE$ loss term the GAN is unable to fully capture the implied volatility surface. More specifically, it undershoots the volatility value resulting in a surface that is inconsistent with the market. This is likely due to the synthetic data being difficult to segment. In figure \ref{fig:comp_loss_models_fit} we show the GAN-1 and GAN-2 network without $L_{MSE}$. 

\begin{figure}[htbp]
\centering
\includegraphics[width=\textwidth]{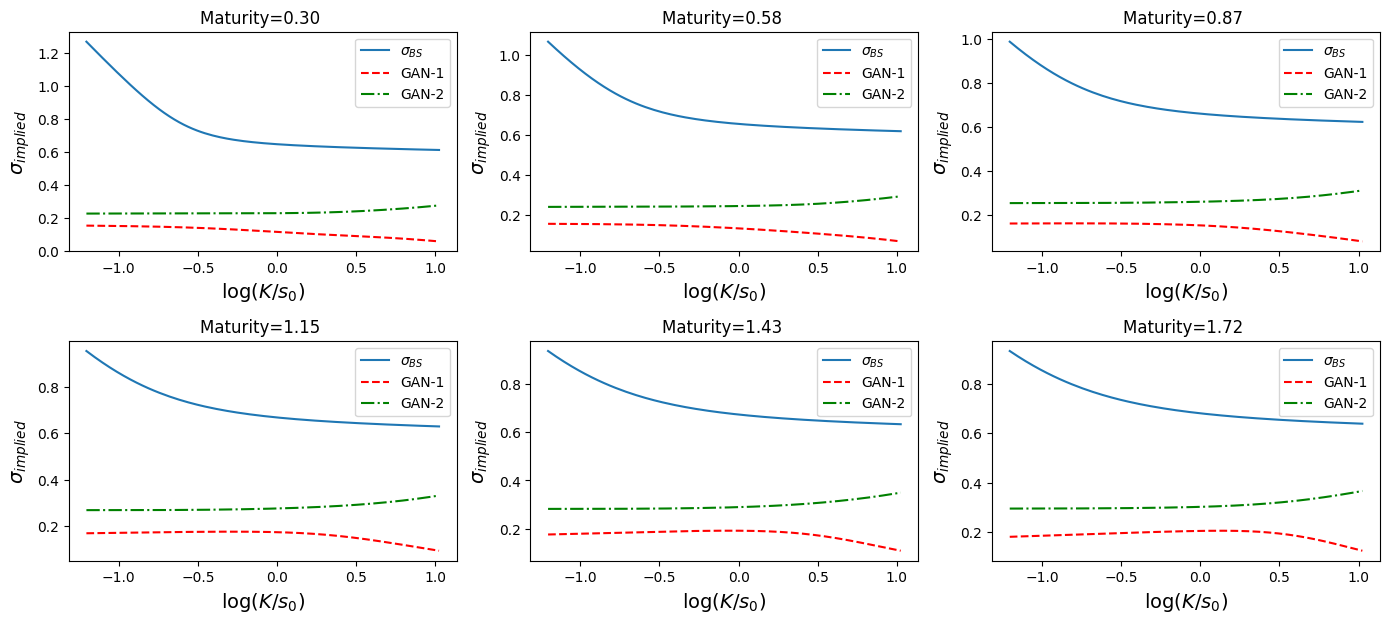}
\caption{The implied volatility computed by GAN-1 without MSE (red) and GAN-2 without MSE 
 (green) compared to the Black-Scholes implied volatility (blue).}
\label{fig:comp_loss_models_fit}
\end{figure}

In figures \ref{fig:comp_loss_models_fit}, we see that the implied volatility generated by the GAN without $MSE$ cannot fit the ground truth implied volatility at any point and fails to estimate the implied volatility surface that is market consistent. This highlights the importance in the inclusion of the MSE term in the loss function.

\subsection{Experiment 2: Comparison of Implied Volatility with IV-ANN}
In this experiment, we compare the performance of our proposed methods to the IV-ANN method \citep{liu-2019-1} with $4$ hidden layers with the ReLU activation function. We design the GAN network such that the MAE of the two approaches are similar. Doing this we find that our proposed GAN-2 is roughly two times faster than the IV-ANN approach. The performance is gauged by the mean absolute error of the predicted values, the MAE for the IV-ANN was reported as $9.73e^{-5}$ in \citep{liu-2019-1}. We summarized the training time, mean absolute error and the mean absolute percent error for each method in table \ref{tab:paper1}. As seen in table \ref{tab:paper1}, using the same training data and testing set we can see that GAN-2 outperforms IV-ANN in runtime when they are both trained to a similar mean absolute error.

\begin{table}[htbp]
\caption{Timing and error comparison between GAN-2 and IV-ANN}
\centering
\begin{tabular}{lccc}
\hline
\textbf{Model} & \textbf{Training Time} & \textbf{MAE} &\textbf{MAPE}\\
\hline
IV-ANN & 446.204831 & $2.3235e^{-5}$ &$0.044106\%$ \\
GAN-2 & 209.194824 & $2.1376e^{-5}$ & $0.039810\%$\\
\hline
\end{tabular}
\label{tab:paper1}
\end{table}

Next we evaluate the quality of our proposed method vs the IV-ANN method. We show the implied volatility learned by the IV-ANN method in figure \ref{fig:iv_ann_imp_vol}. Figure \ref{fig:iv_ann_imp_vol} looks at the cross section of the implied volatility surface generated by  the IV-ANN method for different maturities. From figure \ref{fig:iv_ann_imp_vol} we see that the implied volatility learned by IV-ANN has difficulty fitting the Black-Scholes implied volatility qualitatively.

\begin{figure}[htbp]
\centering
\includegraphics[width=\textwidth]{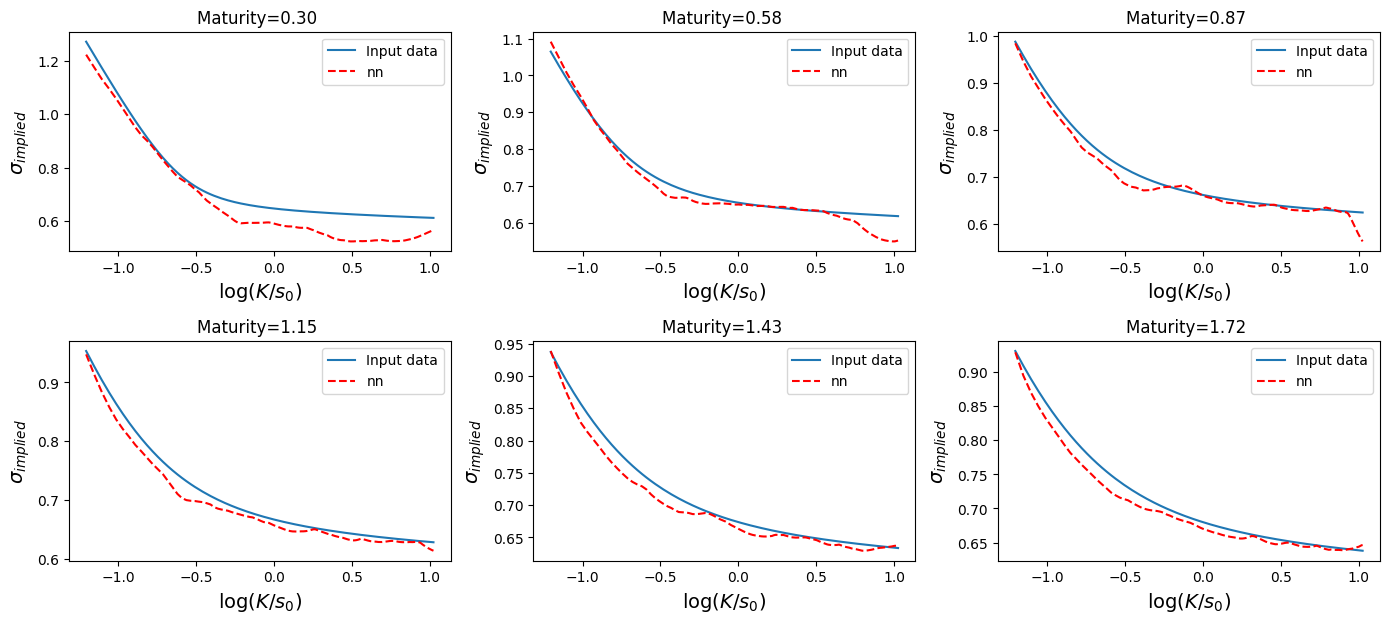}
\caption{Implied volatility cross section generated by the IV-ANN for different maturities.}
\label{fig:iv_ann_imp_vol}
\end{figure}

We compare the repricing performance of our proposed method vs the IV-ANN method. The repricing error heatmap of the IV-ANN method and GAN-2 is show in figure \ref{fig:paper3}. We see that our proposed GAN-2 network outperforms the IV-ANN method. The IV-ANN method has a maximum ARPE of $0.080\%\pm 0.12\%$ and a MRPE of $0.70\%$. Our proposed method has a maximum ARPE of $0.08\%\pm 0.1\%$, and a MRPE of $0.50\%$.

\begin{figure}[htbp]
\centering
\includegraphics[width=\textwidth]{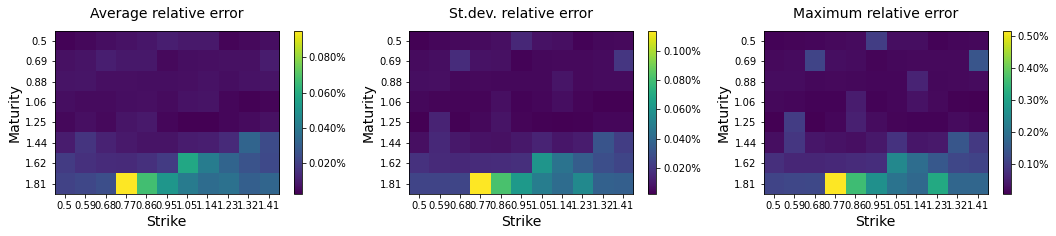}
\includegraphics[width=\textwidth]{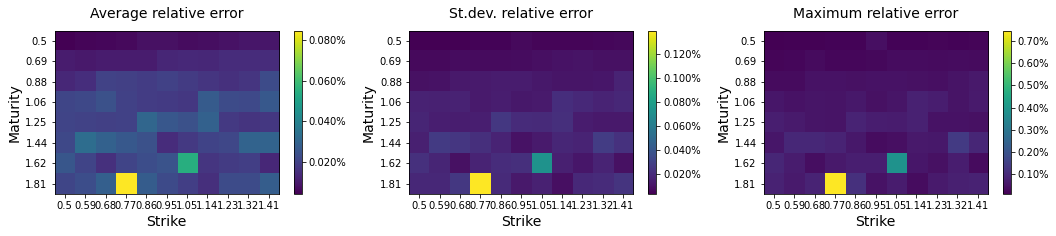}
\caption{GAN-2 implied volatility (top) and IV-ANN implied volatility (bottom repricing error heatmap. The warmer colors (yellow) respresent higher error values and colder colors (blue) represent lower error values.}
\label{fig:paper3}
\end{figure}

\subsection{Experiment 3: Comparison of Local Volatility Computation with Deep Neural Network and SSVI}
In this experiment we compare the local volatility computed by our method with a deep neural network (DNN). We use the pricing error given by the SSVI method \citep{gatheral-2014} as a benchmark. We implemented the deep neural network method using $4$ hidden layers with $400$ nodes with ReLU activation functions. The SSVI was computed using the analytical equation \eqref{eq:heston_like_ssvi} with the Heston-like parametric function in equation \eqref{eq:heston-parametric-function}. Note that the local volatility surface is not unique \citep{lee-2002}, thus many different local volatility surfaces may satisfy given market prices.

In this experiment we trained our proposed method and the DNN using the data constructed according to section \ref{sec:data}. Then the local volatility was computed using our proposed method, the deep neural network, and SSVI based on out-of training data generated. The training time, maximum ARPE and MRPE for our proposed method, the DNN and the benchmark SSVI is summarized in table \ref{tab:paper3_model_performance}.

\begin{table}[htbp]
\centering
\caption{We compare the performance of GAN-1, GAN-2, and DNN to the SSVI.}
\begin{tabular}{lccc}
\hline
\textbf{Model} & \textbf{Training Time} & \textbf{Max ARPE} & \textbf{MRPE} \\
\hline
SSVI & -- & $0.8\%\pm 0.2\%$ & $1.0\%$\\
GAN-1 & 198.689732$s$ & $0.1\%\pm 0.175\%$ & $1.2\%$\\
GAN-2 & 208.098838$s$ & $0.175\%\pm 0.175\%$ & $1.0\%$\\
DNN & 246.32142$s$ & $0.8\%\pm 0.2\%$ & $3.5\%$\\
\hline
\end{tabular}
\label{tab:paper3_model_performance}
\end{table}

The results of table \ref{tab:paper3_model_performance} show that our method is more efficient and accurate than the DNN. We also compared the arbitrage violations that occurred in our proposed method to the DNN and SSVI method. The arbitrage violations for the DNN and SSVI are shown in table \ref{tab:arbitrage_violation_loc_vols_dlv_ssvi}. We can see that during training and testing all methods do not produce any arbitrage opportunities. 

\begin{table}[htbp]
\centering
\caption{Number of arbitrage violations detected in local volatility surface computed by the DNN and SSVI method.}
\begin{tabular}{lccc}
\hline
 & \textbf{GAN-2} & \textbf{DNN} & \textbf{SSVI}\\
\hline
& Soft Constraint & Soft Constraint & Hard Constraint\\
\hline
Training & 0/28306  & 0/28306  & 0/28306 \\
Testing & 0/4995 & 0/4995 & 0/4995\\
\hline
\end{tabular}
\label{tab:arbitrage_violation_loc_vols_dlv_ssvi}
\end{table}

Finally we compared the pricing error produced by each method is shown in figures \ref{fig:comp_models_loc_vol_err}, and \ref{fig:paper3_price_error_dlv_ssvi}. We can clearly see that our proposed method GAN-2 has a maximum ARPE of $0.175\%\pm 0.175\%$ . This is better than the SSVI and DNN with maximum ARPE of $0.8\%\pm 0.2\%$. Our proposed method GAN-2 also performs comparably to the SSVI in terms of MRPE of $1\%$ (For GAN-2) and better than the DNN with a MRPE of $3.5\%$.

\begin{figure}[htbp]
\centering
\includegraphics[width=\textwidth]{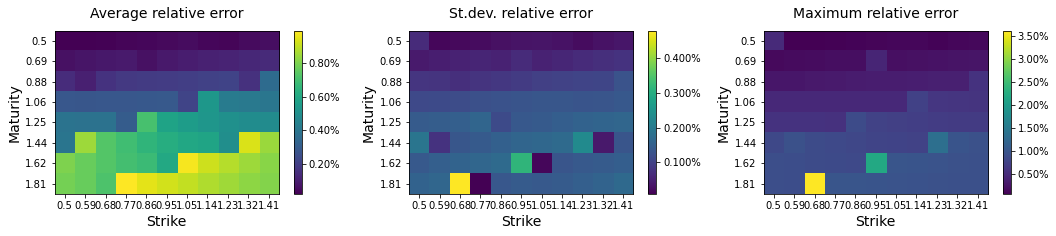}
\includegraphics[width=\textwidth]{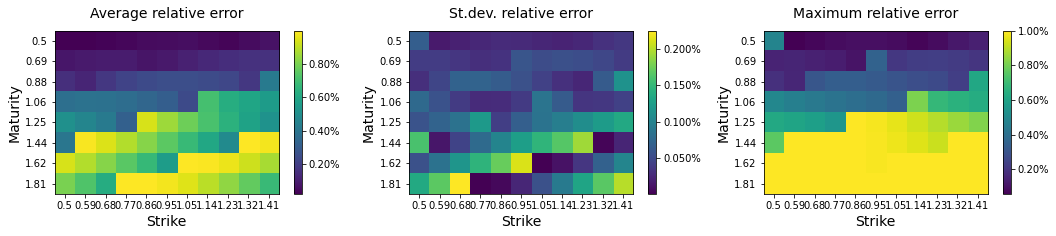}
\caption{Pricing error computed from local volatility using DNN (top) and SSVI (bottom). The warmer colours (yellow) represent higher error values and colder colours (blue) represent lower error values.}
\label{fig:paper3_price_error_dlv_ssvi}
\end{figure}

\subsection{Experiment 4: Comparison of Local Volatility Computation with VAE}

In this experiment we compare the local volatility computed by our proposed method with and a VAE . We implemented the VAE using $2$ hidden layers with $200$ nodes. The number of hidden layers and nodes were chosen such that the training time for the VAE and our GAN methods were the same. The encoder was constructed with a ReLU activation function for the hidden layers and a sigmoid activation function for the output. The decoder was constructed with ReLU activation function for all layers. The VAE was trained over the same training set as our proposed method as outlined in experiment 3. To compare our methods we use the same set up as in experiment 3. Given similar training times of $229.9448$ seconds, the pricing errors for the VAE trained is shown in figure \ref{fig:vae_error}. The maximum ARPE for VAE method is $2\% \pm 0.8\%$ and the MRPE is  $50\%$ this is due to the noise generated in VAE models. Given the same training time, the VAE performs worse than the results of our proposed GAN-1 as we saw in experiment 1.

\begin{figure}[htbp]
\centering
\includegraphics[width=0.32\textwidth]{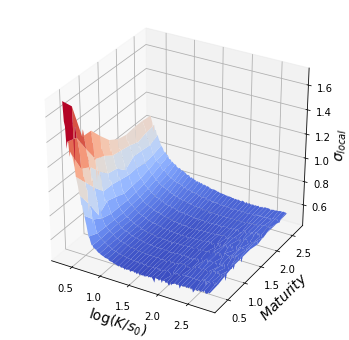}
\includegraphics[width=0.32\textwidth]{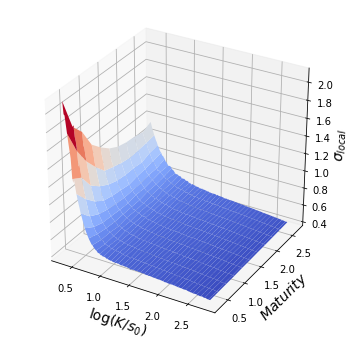}
\caption{Local volatility surface generated by trained VAE using a shallow network (left) and a deep network (right).}
\label{fig:vae_loc_vol}
\end{figure}

Give the same training time, the local volatility surface by our proposed method is smooth as shown in figure \ref{fig:comp_models_loc_vol}. This is in contrast to the noisy surface generated from VAEs as shown in the left hand side of figure \ref{fig:vae_loc_vol}. We note that the VAE method can produce smooth local volatility surfaces as shown in the right hand side of figure \ref{fig:vae_loc_vol} however it requires a deeper network for the encoder or larger batch sizes. Adding the extra layers  to smoothen the local volatility surface resulted in a training time of $1036.027$ seconds. Noise is present in VAE because the encoder network is trained to generate the parameters of a normal distribution that fits the data and data can be sampled using these learned parameters \citep{kingma-2014-vae}. This adds noise to the output if there is not a sufficient amount of samples generated.

\begin{figure}[htbp]
\centering
\includegraphics[width=\textwidth]{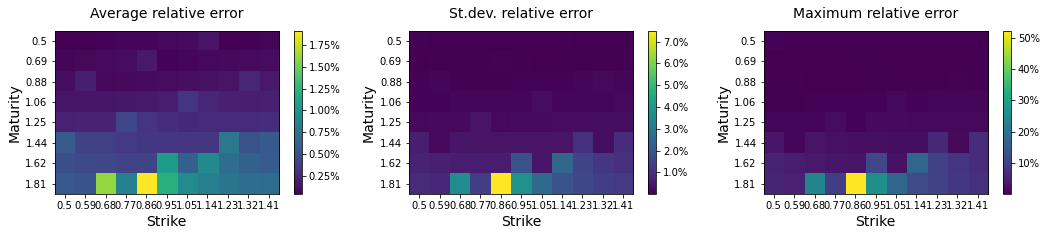}
\caption{Pricing errors computed for different maturities and strike using the VAE method. The warmer colours (yellow) represent higher error values and colder colours (blue) represent lower error values.}
\label{fig:vae_error}
\end{figure}

\subsection{Experiment 5: Generating Market Consistent Volatility Surfaces}
In this experiment we generate market volatility surfaces using limited market data. This experiment shows that our method can be adapted to be used on market data with minimal additional training with limited sample data. This is achieved using a pre-trained a 2-layer discriminator with synthetic data. We fine-tune the 2-layer generator which is retrained on a limited subset of the market data over $200$ epochs with a batch size of $32$. The GAN-2 model is tested on a separate test set that was not used for training the generator.

\subsubsection{Implied Volatility Smile}
We use GAN-2 to generate a market consistent implied volatility surface, this shows the capability of our network to generate market consistent volatility surfaces just by retraining the generator. We present the market consistent implied volatility surface in figure \ref{fig:mkt_imp_vol}.

\begin{figure}[htbp]
\centering
\includegraphics[width=0.32\textwidth]{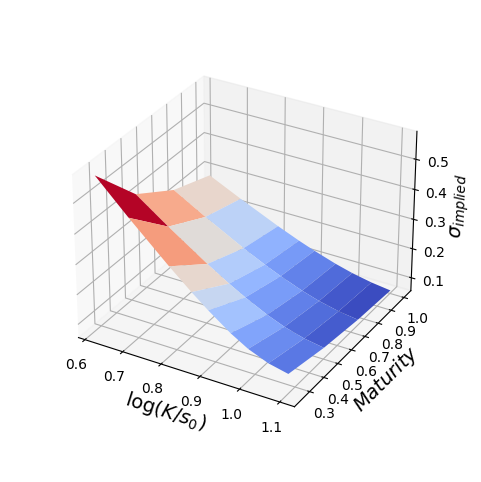}
\includegraphics[width=0.32\textwidth]{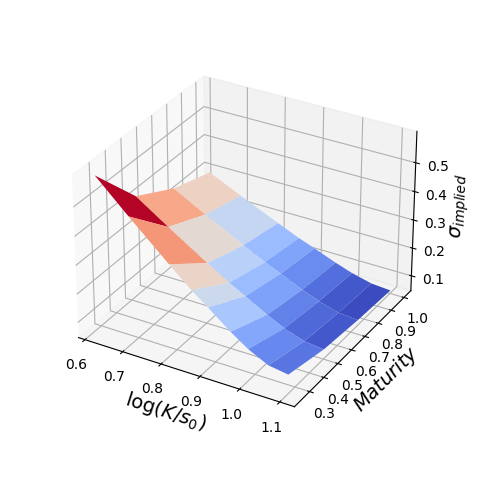}
\caption{Implied volatility generated by the generator (left) and the market implied volatility (right).}
\label{fig:mkt_imp_vol}
\end{figure}

Figure \ref{fig:mkt_imp_vol} shows the implied volatility surfaces from our generator and the market implied volatility. We see they are near identical qualitatively. Quantitatively, we measured the MAE score for GAN-2 as $0.007$ and the MAPE score as $7.28\%$.

\subsubsection{Local Volatility Smile}
We use GAN-2 to generate a market consistent local volatility surface, the market local volatility was approximated using the finite difference method. We present the market consistent local volatility surface in figure \ref{fig:mkt_loc_vol}.

\begin{figure}[htbp]
\centering
\includegraphics[width=0.32\textwidth]{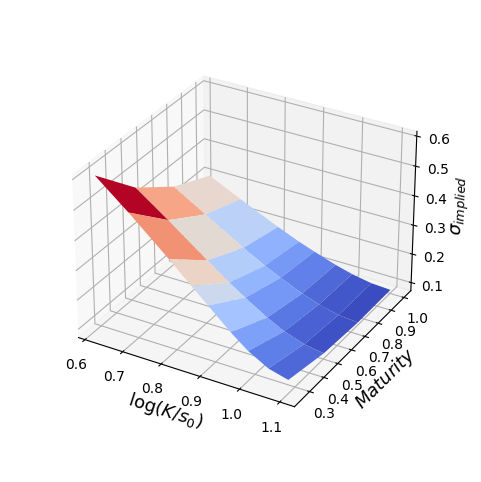}
\includegraphics[width=0.32\textwidth]{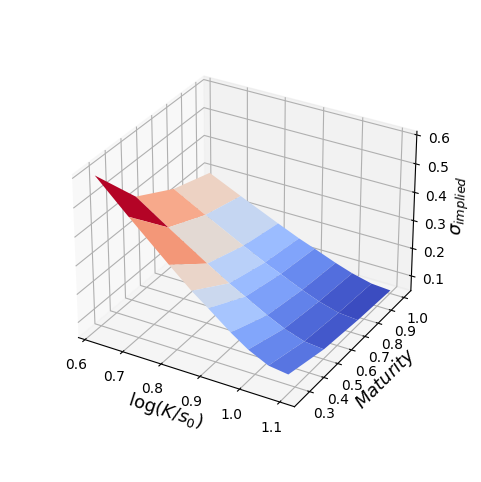}
\caption{Local volatility generated by the generator (left) and the market implied volatility (right).}
\label{fig:mkt_loc_vol}
\end{figure}

Figure \ref{fig:mkt_loc_vol} shows the local volatility surfaces from our generator and the market implied volatility. Quantitatively, we look at the repricing errors of from our local volatility given in figure \ref{fig:mkt_error}.

\begin{figure}[htbp]
\centering
\includegraphics[width=\textwidth]{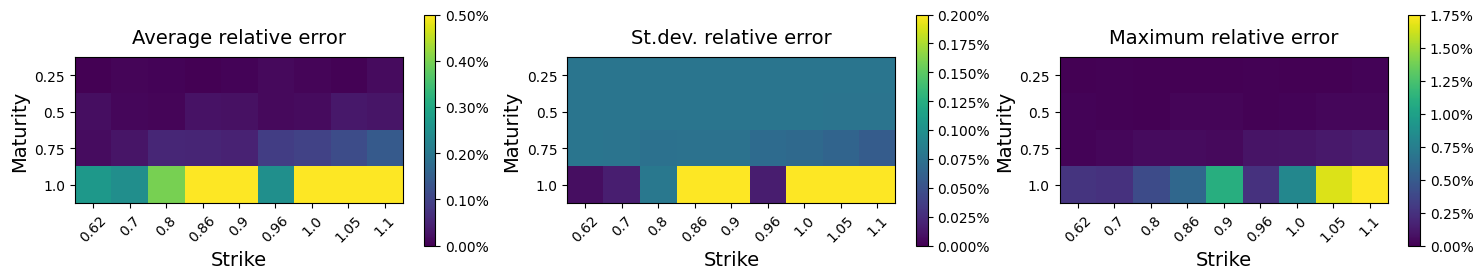}
\caption{Pricing errors computed for different maturities and strike using GAN-2 for market local volatilities. The warmer colours (yellow) represent higher error values and colder colours (blue) represent lower error values.}
\label{fig:mkt_error}
\end{figure}

In figure \ref{fig:mkt_error} we see smaller errors in shorter time-to-maturity over longer time-to-maturity at strikes that are far OTM. This is due to the lack of data points in this region. 

\section{Conclusion}\label{sec:Conclusion}

In this paper, we present a framework to generate volatility surfaces efficiently using GAN. By using a generator and discriminator together we are able to create a model framework that is efficient and accurate with shallow-narrow networks. We used no-arbitrage penalty terms with the MSE loss function to penalize arbitrage opportunities generated by the generators. The log-likelihood estimation of the discriminator adds to the generator by allowing the posterior generator to successfully output a valid volatility surface that is consistent with the option price. The discriminator was trained as a classifier to classify the volatility as true or false.

Our numerical results show that GAN-2 is outperforms GAN-1. However, a deeper discriminator network doesn't generate performance better than a shallow one. The majority of arbitrage violations detected by our discriminator is driven by butterfly arbitrage. Regularization using no-arbitrage soft constraints helps mitigate this for implied volatility computation. We show that our method is more accurate than the IV-ANN method with a faster training time and more accurate predictions. In particular, the GAN-2 model only required $209.195$ seconds and produced a MAPE of $3.981e^{-5}$, compared to our implementation of the IV-ANN model which took $446.205$ seconds with a MAPE of $4.4106e^{-5}$.

We further showed the capability of our network to compute the local volatility and compared our model with a deep neural network and SSVI. As shown in figure \ref{fig:paper3_price_error} our method produces a maximum ARPE of $0.1\%\pm0.117\%$ which is better than both the deep neural network and SSVI method with maximum ARPE of $0.8\%\pm0.2\%$. Our proposed method also has a MRPE comparable to the SSVI at $1\%$.  We have shown that our model can be generalized to market data with limited data samples using a pre-trained discriminator and fine-tuning the generator. We show that we can generate market consistent volatility surfaces in figures \ref{fig:mkt_imp_vol} and \ref{fig:mkt_loc_vol}.

With this paper we have shown the potential of generative methods in solving the inverse problem. In this work we assume that only one price exists for an option. It would be interesting to explore generative models behave in a market setting with more than one offered price and how it learns the volatility surfaces. It would also be interesting to see an extension into diffusion models to improve generated surface quality.

\section*{Disclosure Statement}
No potential conflict of interest is reported by the author(s).

\section*{Funding}
This work was supported by the Natural Sciences and Engineering Research Council of Canada.

\section*{ORCID}
\begin{description}
\item \textit{Andrew Na}: https://orcid.org/
0000-0002-6162-8171
\item \textit{Justin W.L. Wan}: https://orcid.org/
0000-0001-8367-6337
\end{description}

\bibliographystyle{plainnat}
\bibliography{References}

\end{document}